\newcounter{cntr}\setcounter{cntr}{1}
\newtheorem{theorem}{Theorem}
\newtheorem{lemma}[theorem]{Lemma}
\newtheorem{definition}{Definition}
\newcommand{\BibTeX}{B\kern-.05em{\sc i\kern-.025em b}\kern-.08em\TeX}
\newcommand{\connectionpoint}[1]{c_\odot{#1}}
\newcommand{\rh}[3]{\text{rh}_{#1, #2}\text{(}#3\text{)}}
\newcommand{\rrh}[2]{\text{rrh}_{#1}\text{(}#2\text{)}}
\newcommand{\mrrh}[1]{\text{rrh(}#1\text{)}}
\begin{document}


\begin{frontmatter}


\paperid{2613} 


\title{Reaching New Heights in Multi-Agent Collective Construction}


\author[A]{\fnms{Martin}~\snm{Rameš}\orcid{0009-0000-3301-6269}\footnote{Author. Email: ramesmar@fit.cvut.cz}}
\author[A]{\fnms{Pavel}~\snm{Surynek}\orcid{0000-0001-7200-0542}\thanks{Corresponding Author. Email: pavel.surynek@fit.cvut.cz}}

\address[A]{Faculty of Information Technology, Czech Technical University, Thákurova 9, 160 00 Prague 6, Czechia}


\begin{abstract}
We propose a new approach for multi-agent collective construction, based on the idea of reversible ramps. Our ReRamp algorithm utilizes reversible side-ramps to generate construction plans for ramped block structures higher and larger than was previously possible using state-of-the-art planning algorithms, given the same building area. We compare the ReRamp algorithm to similar state-of-the-art algorithms on a set of benchmark instances, where we demonstrate its superior computational speed. We also establish in our experiments that the ReRamp algorithm is capable of generating plans for a single-story house, an important milestone on the road to real-world multi-agent construction applications.
\end{abstract}

\end{frontmatter}


\section{Introduction}
Nature inspires progress in many fields of science and multi-agent systems are no different. Termite colonies are the original inspiration for the TERMES robots, a Harvard University project leading to one of the more studied formalizations of multi-agent collective construction \cite{petersen2011termes, koenigexact}.

The multi-agent collective construction (MACC) tasks a group of cooperative agents with building a given block structure. The agents are roughly the size of the blocks, having six main high-level actions. They can \textit{enter} the building area at the border, \textit{move} to a neighbor position, \textit{deliver} their block or \textit{pick up} a block, \textit{move} back to the border, and \textit{leave}. An agent can carry at most one block at a time. All construction happens within a downward gravity field. An agent can move at most one block up or down when moving to the neighbor cell. The delivered or picked-up block is required to be at the same height as the agent at the end/start of the action, respectively. An example of a TERMES construction site and its MACC discretization is shown in figure \ref{fig:TERMES2MACC}.

To get to higher positions, the agents must build ramps. On simple ramps, the agent moves one block up with each edge on the ramp path. This naturally leads to searching for the longest path, when deconstructing tall columns. On a partial grid graph, when some of the columns of the target structure are already placed, the problem of the longest path is hard. In this paper, we aim to show that by making a reasonable concession on the structure's maximum height and dedicating some of the ramp nodes to ramp reversal, we can switch to building ramps with a tree footprint, instead of a path footprint. In essence, we aim to change the hard problem of finding the longest path for a simple ramp to the problem of finding a spanning tree for a compound ramp. By reversing the side ramps like the agent does between figures \ref{fig:forward-side-ramp} and \ref{fig:backward-side-ramp}, we may continue on the central path from the top of the reversed ramp, even when it is not part of the path footprint.

\begin{figure}
    \centering
    \begin{subfigure}[b]{\columnwidth}
        \centering
        \includegraphics[scale=0.25]{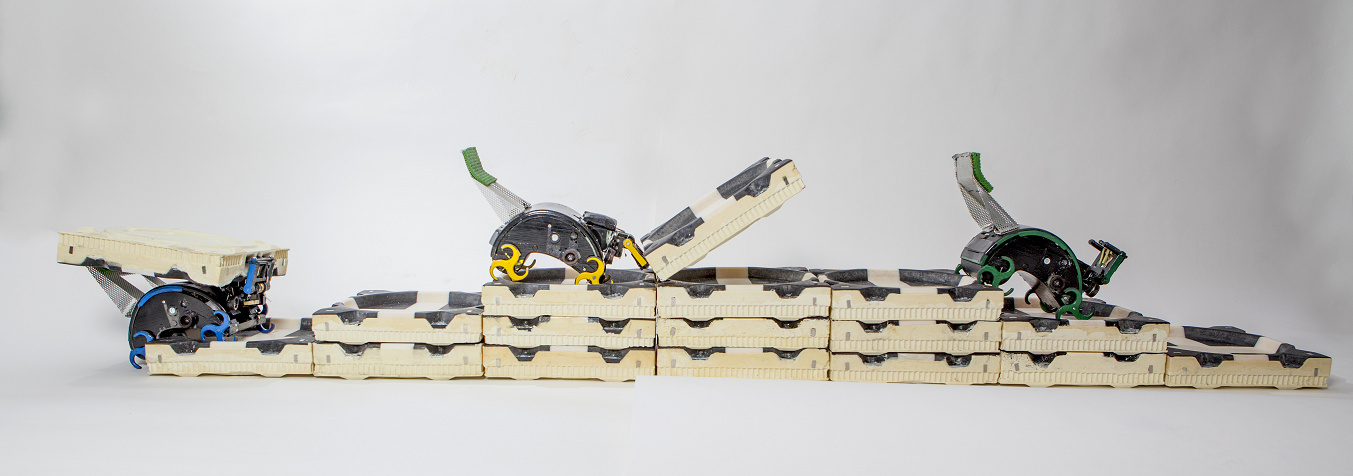}
        \caption{"Termes robot 01" by Forgemind ArchiMedia is licensed under CC BY 2.0 \cite{ImageTERMES}.}
    \end{subfigure}
    \begin{subfigure}[b]{\columnwidth}
        \centering
        \vspace{2mm}
        \includegraphics[scale=1.25]{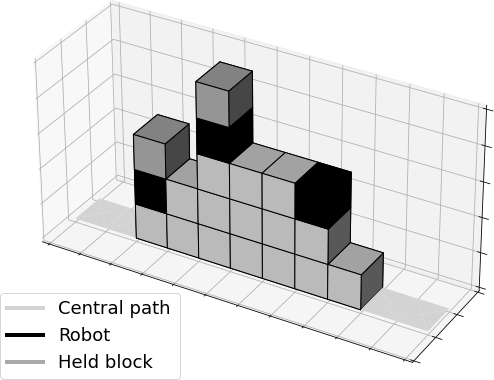}
        \caption{MACC discretization of the construction area.}
    \end{subfigure}
    \vspace{10pt}
    \caption{The TERMES robots and their Multi-Agent Collective Construction discretization. The left agent is carrying a block, the middle agent is placing a block and the right agent is moving without a block.}
    \vspace{15pt}
    \label{fig:TERMES2MACC}
\end{figure}


\section{Related work}
Collaborative construction is a broad research field, encompassing a wide variety of robotic agents with a diverse set of movement and building strategies. For instance, there are robots, which themselves act as building blocks \cite{blockSelfAssembly, SphereRobots}. Some robotic systems use UAVs to build the target structure \cite{flying3dPrinter, flyingBricks, flyingBeamsAndColumns}. Some use them in combination with ground agents \cite{flyingGroundCooperation}. Some proposed ground agents take the form of an inchworm and build lightweight lattice structures \cite{nasaInchword}.

We choose the TERMES system \cite{petersen2011termes} and its associated MACC problem. The main design principle of the TERMES robots is simplicity. The robots can move forward, turn left 90°, turn right 90°, pick up a block just ahead, and deliver a block just ahead \cite{petersen2011termes}. The robots use foam 21.5x21.5x4.5 cm blocks as bricks, with indents and magnets, ensuring stable block placement \cite{petersen2011termes}.

The original TERMES publication provided three rudimentary algorithms for the robots, evaluated in a separate publication \cite{petersen2011termes, TERMESsimpleAlgorithm}. The first algorithm searches for a path connecting all structure nodes without ramps. The second algorithm works with branching and merging of the main path, still without ramps. The third algorithm has a 2-lane simple staircase ramp, deconstructed once the structure is finished.
Among the notable algorithms, that do not use ramps, is the compiler for scalable construction -- capable of solving structures with the building area containing thousands of block columns \cite{deng2019compiler}.
The Tree-Based algorithm for construction robots \cite{TACR} is a polynomial algorithm, that performs dynamic programming on a spanning tree in the inner loop and searches for a good spanning tree in the outer loop. The initial algorithm is single-agent, but there is also a multi-agent variant \cite{multiagentTACR}. At the start, the Tree-Based algorithm generates the workspace matrix, which expands the building area to each side by the structure such that all columns within the building area are at least $h$ blocks from the edge of the building area ($h$ is the height of the column) \cite{TACR}.

There are also three solvers (two in the first paper), which exactly optimize the MACC problem and provide optimal solutions \cite{koenigexact, icaart24}. The first two solve MACC using mixed-integer linear programming (MILP) and constraint programming, respectively; the last one is the generalization of the first, providing optimal solutions even when the agents do not have a constant duration of actions \cite{koenigexact, icaart24}. That being said, the solutions of the first and third models may be slightly different because of the stricter collision avoidance constraints the third model uses \cite{icaart24}.

The last introduced model \cite{decomposition} uses the decomposition before tasking a MILP solver to optimally solve the substructures. The decomposition leads to faster run times at the cost of getting only a suboptimal solution.

For later analysis, we have chosen the big O notation by \cite{Knuth1976BigOA}. It defines the following sets (relevant to this paper) for functions $f$ and $g$:
\begin{itemize}
    \item $O(f(n))$ denotes the set of all $g(n)$ such that there exist positive constants $C$ and $n_0$ with $|g(n)| \leq C f(n), \forall n \geq n_0$
    \item $\Omega(f(n))$ denotes the set of all $g(n)$ such that there exist positive constants $C$ and $n_0$ with $g(n) \geq C f(n)$
\end{itemize}

\subsection{Multi-Agent Collective Construction}

Multi-Agent Collective Construction problem is defined by \cite{koenigexact} as follows: Let there be a three-dimensional grid. Let $X$, $Y$, and $Z$ be the size of the grid in the two horizontal axes and the vertical axis, respectively. Let $\widehat{i}$ be shorthand for $\{0, \dots, i-1\}, i \in \mathbb{Z}_+$. Let $\mathcal{C} = \widehat{X} \times \widehat{Y} \times \widehat{Z}$ be the set of all positions within the grid. Let $\mathcal{P}$ be the top-down projection of $\mathcal{C}$ into the first two dimensions. Let $z_{t, x, y}$ be the height of the block structure at timestep $t$ at position $(x, y) \in \mathcal{P}$. Let $\mathcal{T} = \widehat{T}$ be the planning horizon of $T$ timesteps. Let $\mathcal{B} = \{(x, 0, 0): x \in \mathcal{X}\} \cup \{(x, Y-1, 0): x \in \mathcal{X}\} \cup \{(0, y, 0): y \in \mathcal{Y}\} \cup \{(X-1, y, 0): y \in \mathcal{Y}\}$ be the set of border cells at the perimeter of the building area. Let $\mathcal{N}_{(x, y)} = \{(x - 1, y), (x + 1, y), (x, y - 1), (x, y + 1)\} \cap \mathcal{P}$ be the set of neighbor positions of $(x, y)$.

At each timestep, a grid cell can be either unoccupied, occupied by an agent, or occupied by a block. When a grid cell at $(x, y, z) \in \mathcal{C}$ is occupied, all cells below it (\{$(x, y, z'): 0 \leq z' < z$\}) must be occupied by blocks. The grid starts with all cells unoccupied. The task is to build a given target block structure using agent actions. All agents must leave the grid by the end of the planning horizon. Border cells can be occupied only by agents. At most one object can occupy a grid cell at any given timestep (forbidding agent vertex collisions). An agent can carry at most one block. While the agent carries the block, it is considered to be a part of the agent.

An agent has the following actions:
\begin{itemize}
    \item \textit{enter} through a border cell (agent moves from outside the grid to a border cell while optionally carrying a block)
    \item \textit{leave} through a border cell (agent moves from a border cell to outside the grid while optionally carrying a block)
    \item \textit{deliver} a block at neighbor position $(x', y') \in \mathcal{N}_{(x, y)}$, while the agent stands at $(x, y)$, given $z_{t, x, y} = z_{t, x', y'}$ at the start of the action and the agent holds a block at the start of the action
    \item \textit{pick up} a block at neighbor position $(x', y') \in \mathcal{N}_{(x, y)}$, while the agent stands at $(x, y)$, given $z_{t, x, y} = z_{t, x', y'}$ at the end of the action and the agent does not hold a block at the start of the action
    \item \textit{move} from a cell $(x, y)$ to a neighbor cell $(x', y') \in \mathcal{N}_{(x, y)}$,
    given $|z_{t, x, y} - z_{t, x', y'}| \leq 1$ at the start of the action
    \item \textit{wait} at the same position
\end{itemize}

Let $\mathcal{A} = \{\text{enter}, \text{leave}, \text{deliver}, \text{pick\_up}, \text{move}, \text{wait}\}$ be the set of all action types. For simplicity, unless otherwise stated, the duration of each action is assumed to be one timestep.

For agent edge collision prevention, we choose a more restrictive constraint by \cite{icaart24}, who call the area between $(x, y)$ of action start and $(x', y')$ of action end (inclusive on both ends) the \enquote{exclusion zone}. They propose a constraint that at any given timestep the intersection between any two exclusion zones of currently executing actions must be empty.

The original edge collision prevention constraint by \cite{koenigexact} forbids just position exchange of the agents in one timestep (since the agents would have to go through each other). The original constraint, however, does not prevent collisions in cases like one agent going forward, while the second is moving from its side to its start position.


\section{MACC analysis}

Before defining our proposed algorithm, we perform an analysis of the MACC problem. In particular, we focus on the decision problem variant of MACC, which determines, if the given target structure is \textit{buildable} (there exists a finite sequence of actions $a \in \mathcal{A}$, which meets MACC constraints and ends with the finished target structure and no agents on the grid).

Let grid graph $G = (\mathcal{P}, E)$ be a undirected graph, where $E = \{\{(x, y), (x', y')\}:(x, y) \in \mathcal{P} \wedge (x', y') \in \mathcal{N}_{(x, y)}\}$.

Let $\mathcal{B}_\mathcal{P} = \{(x, y): (x, y, z) \in \mathcal{B}\}$ be a projection of $\mathcal{B}$ to $\mathcal{P}$. Let $\mathcal{P}_\boxdot = \mathcal{P} \setminus \mathcal{B}_\mathcal{P}$. Let $G_{t, 0}$ be an induced subgraph of $G$ on nodes $(x, y) \in \mathcal{P}_\boxdot$, where $z_{t, x, y} = 0$. A \textit{ramp area} is a connected component of $G_{t, 0}$, which neighbors at least one border cell.

\subsection{Multi-Agent Collective Deconstruction}

The Multi-Agent Collective Deconstruction (MACD) is a planning problem for the deconstruction of the target structure in a blocks-world using a team of agents. The agents have the same set of action types as in the MACC ($\mathcal{A}_\text{MACD} = \mathcal{A}$) and work on the same-sized grid $\mathcal{C}_\text{MACD} = \mathcal{C}$ over the same planning horizon $\mathcal{T}$.

The pick\_up action is the inverse of the deliver action, the inverse to entry action is the leave action and the move action can act as inverse to itself (i.e. the robot can move in the opposite direction). The wait action is its own inverse action. Because each action has an inverse, MACD is the mirror problem to the MACC and therefore has a solution if and only if the MACC has a solution. In case the action durations are specified, let each action in MACD have a duration of its inverse action in MACC.

\subsection{MACC complexity}

Let us now focus on the complexity of the decision problem if a given structure is \textit{buildable} in the MACC problem. Once again, because the MACD problem is a mirror problem to the MACC, the structure is \textit{buildable} if and only if it is deconstructable in the MACD problem.

Let $V(G)$ and $E(G)$ denote the set of vertices/edges of graph $G$, respectively. Let $\text{deg}(v)$ denote the degree of node $v$. Let $x(v_i)$ and $y(v_i)$ be the x-coordinate and y-coordinate of vertex $v_i$, respectively.

Let $G_9$ be a grid graph as defined by \cite{gridNPcomplete}. In summary, \cite{gridNPcomplete} defines $G_9$ as an embedding of a planar, degree $\leq 3$, bipartite graph into a grid, where one node is represented by a 3 by 3 grid cell cluster and $G_9$ does not contain any nodes with $\text{deg}(v) < 2$.

\begin{definition}
    The simple ramp $s_r$ is a non-empty path subgraph of $G$ with nodes $v_0, \dots, v_n$, which starts at a border cell and all remaining nodes are not at the border (i.e. $\forall i \in \widehat{n+1} \setminus \{0\}: v_i \in \mathcal{P}_\boxdot$).
\end{definition}

Let us define high-level action \textit{ramp\_deliver\_block}($v_i$) as an agent \textit{entering} at the simple ramp start with a block, \textit{moving} to $v_{i-1}$, \textit{delivering} a block at $v_i$, \textit{moving} back to $v_0$ and \textit{leaving} the grid.

Let \textit{flat ramp projection} be a projection of a simple ramp to two dimensions, where the vertical axis is the grid z-axis and the horizontal axis is the distance from ramp end $v_n$ when moving on the ramp path. Figure \ref{fig:pathBlockPlacement} shows an example of such a projection.

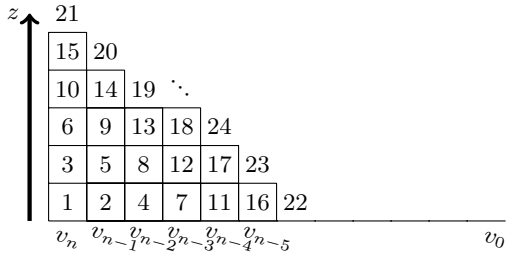
\begin{figure}
\begin{tikzpicture}[scale=.5]

  \begin{scope}
    \draw (0, 0) grid (1, 5);
    \draw (2, 0) grid (1, 4);
    \draw (3, 0) grid (1, 3);
    \draw (4, 0) grid (1, 3);
    \draw (5, 0) grid (1, 2);
    \draw (6, 0) grid (1, 1);
    \draw (12, 0) grid (5, 0);
    \node[anchor=center, rotate=-15] at (0.5, -0.5) {$v_n$};
    \node[anchor=center, rotate=-15] at (1.75, -0.5) {$v_{n-1}$};
    \node[anchor=center, rotate=-15] at (2.75, -0.5) {$v_{n-2}$};
    \node[anchor=center, rotate=-15] at (3.75, -0.5) {$v_{n-3}$};
    \node[anchor=center, rotate=-15] at (4.75, -0.5) {$v_{n-4}$};
    \node[anchor=center, rotate=-15] (1) at (5.75, -0.5) {$v_{n-5}$};
    \node[anchor=center, rotate=-15] (2) at (11.75, -0.5) {$v_0$};
    \foreach \x in {0,...,5} {
        \foreach \y in {0,...,\x}{
            \node[anchor=center] at (0.5+\x-\y, 0.5+\y) {$\thecntr$};
            \stepcounter{cntr}}}
    \node[anchor=center] at (6.5, 0.5) {$\thecntr$};
    \stepcounter{cntr};
    \node[anchor=center] at (5.5, 1.5) {$\thecntr$};
    \stepcounter{cntr};
    \node[anchor=center] at (4.5, 2.5) {$\thecntr$};
    \node[anchor=center] (7) at (4.5, 2.5) {};
    \node[anchor=center] (8) at (2.5, 4.5) {};
    
    \path (7) -- node[auto=false, rotate=-45]{\ldots} (8);
    \draw[->,ultra thick] (-0.5, 0)--(-0.5, 5.5) node[left]{$z$};
  \end{scope}
\end{tikzpicture}
\caption{Order of block placement on a simple ramp.}
\vspace{15pt}
\label{fig:pathBlockPlacement}
\end{figure}

Let us now define the simple ramp-building strategy as starting with the simple ramp $v_0, \dots, v_n$ devoid of blocks and an agent performing \textit{ramp\_deliver\_block}($v_i$) at a position with minimum $L_1$ distance from the $v_n$ bottom block within the \textit{flat ramp projection}. If there is more than one such position, choose the one with the lowest $z$ position. This results in the first placement of the block at $(x(v_n), y(v_n), 0)$ (if $2 \leq n$), with the $L_1$ distance 0. Figure \ref{fig:pathBlockPlacement} shows the following block placement order. Please note, that the strategy can use local block placement to determine the next block position. Let a 2-flat area be an area on the simple ramp with two successive block columns with the same height (even 0). Since the simple ramp strategy places blocks in \enquote{layers} based on their $L_1$ distance, the first 2-flat area when going from $v_n$ to $v_0$ occurs either at the top block of the unfinished next $L_1$ \enquote{layer} or at the simple ramp base (if the top $L_1$ \enquote{layer} is finished, beginning the next one from the bottom). This local placement strategy becomes relevant, when determining block placement position for a block at a ramp with side ramps.

The simple ramp-building strategy allows to place $n$ blocks on the first layer (from $v_1$ to $v_n$), $n-1$ blocks on the second layer (from $v_2$ to $v_n$), and so on, until reaching the $n$-th layer with one block at $v_n$.

Let the length of the simple ramp $l(s_r)$ be the number of edges in its path (i.e. $l(s_r) = n$). Let the ramp maximum height $h_\text{max}(s_r)$ be the number of blocks in the highest column on the ramp. Let the \textit{simple ramp of height $m \in \widehat{n+1}$} be a simple ramp, built using the simple ramp-building strategy, stopped after \textit{ramp\_deliver\_block}($v_i$) places the first block at height $z = m$.

Let us define high-level action \textit{ramp\_pick\_up\_block}($v_i$) as an agent \textit{entering} at the simple ramp start with a block, \textit{moving} to $v_{i-1}$, \textit{picking\_up} the block at $v_i$, \textit{moving} back to $v_0$ and \textit{leaving} the grid. Let the simple ramp-deconstruction strategy be a strategy, that applies \textit{ramp\_pick\_up\_block} to nodes in reverse order relative to the simple ramp-building strategy which would achieve the block placement at the start of deconstruction. The block to be removed can also be determined based on local area, by searching from $v_0$ to $v_{n-1}$ for a node $v_i$ and its successor $v_{i+1}$ with heights $0 < z_{t, x(v_i), y(v_i)} = z_{t, x(v_{i+1}), y(v_{i+1})}$. If no such $v_i$ exists, remove the block at $v_n$. Otherwise, remove the block at $v_i$. Since the block placements match the outcome of a simple ramp-building strategy, the flat area matches the end of the unfinished last $L_1$ layer and $v_i$ the position of the last block placement. If there is no flat block area, the default strategy removes the last block placed on the $L_1$ layer, the top block.

Let us define a high-level action \textit{add\_edge} for a simple ramp $s_r$ and a node $v_{n+1} \in \mathcal{N}_{(x(v_n), y(v_n)} \cap \mathcal{P}_\boxdot$ ($\forall i \in \widehat{n+1}: v_{n+1} \neq v_i$). Let us build a \textit{simple ramp of height $(m-1)$}. We can do that because $m \leq n+1 \Leftrightarrow m-1 \leq n = l(s_r)$. Now add the node $v_{n+1}$ to the end of the simple ramp path. Since the ramp just finished an $L_1$ layer and the new column is one block taller than the end of the ramp, adding the column matches the simple ramp built on path $v_0, \dots, v_{n+1}$. Now deconstruct the simple ramp using the simple ramp deconstruction strategy.

\begin{lemma}
    If there exists a simple ramp $s_r$ of length $n$, with nodes $v_0, \dots, v_n$, with $v_n$ next to a column $(x', y') \in \mathcal{N}_{v_n}$ of the target structure with height $m \leq n+1$, then the column at position $(x', y')$ is deconstructable.
    \label{lbl:deconstructablePath}
\end{lemma}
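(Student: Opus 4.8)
The plan is to derive the statement directly from the simple ramp-building strategy, the simple ramp-deconstruction strategy, and the \textit{add\_edge} construction, by treating the target column at $(x', y')$ as an extra final node $v_{n+1}$ of the ramp. If $m = 0$ there is nothing to deconstruct, so assume $m \geq 1$. First I would check that $v_0, \dots, v_n, v_{n+1}$ with $v_{n+1} := (x', y')$ is again a simple ramp, now of length $n+1$: it satisfies $v_{n+1} \in \mathcal{N}_{v_n}$ by hypothesis, $v_{n+1} \in \mathcal{P}_\boxdot$ because a column of height $m \geq 1$ cannot sit on a border cell, and $v_{n+1} \neq v_i$ for every $i \in \widehat{n+1}$ because the ramp nodes $v_1, \dots, v_n$ carry no blocks while $(x', y')$ carries $m \geq 1$ of them.

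Next I would run the simple ramp-building strategy on the original path $v_0, \dots, v_n$ until it becomes a \textit{simple ramp of height $m-1$}; this is legitimate since $m \leq n+1 \Leftrightarrow m-1 \leq n = l(s_r)$, and it leaves the column at $(x', y')$ untouched because the strategy only ever places blocks on the ramp path and only moves the agent along it. At this point $v_n$ has height $m-1$, the strategy has just completed an $L_1$ layer, and the neighbouring column $v_{n+1}$ has height $m$, i.e.\ exactly one more than $v_n$. Exactly as argued for \textit{add\_edge}, the column heights along $v_0, \dots, v_{n+1}$ now coincide, block for block, with those of a \textit{simple ramp of height $m$} built on the extended path. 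Applying the simple ramp-deconstruction strategy to that ramp then removes every block --- all of $v_1, \dots, v_n$ and all $m$ blocks of $(x', y')$ --- and terminates with no agent on the grid, since each high-level action starts and ends with an empty grid (so a single agent suffices and the planning horizon poses no obstacle). Hence the column at $(x', y')$ is deconstructable.

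I expect the one step that needs care to be the block-for-block identification of the configuration reached after building the \textit{simple ramp of height $m-1$} with that of a \textit{simple ramp of height $m$} on $v_0, \dots, v_{n+1}$: the simple ramp-deconstruction strategy is only specified for configurations it could itself have produced, so one must confirm that folding in the fixed target column really does yield such a configuration --- which is precisely what the \textit{add\_edge} paragraph establishes. Everything else is a direct appeal to the already-established strategies. A more self-contained variant, which avoids naming the extended ramp, alternately lowers the ramp through \textit{simple ramp of height} $k$ for $k = m-1, m-2, \dots, 0$ and, between successive lowerings, sends a single agent up to $v_n$ --- then standing at height $k$ --- to \textit{pick\_up} the block that takes the column $(x', y')$ from height $k+1$ down to $k$; it uses the same facts and the same bound $m \leq n+1$.
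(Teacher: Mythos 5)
Your proof is correct and follows essentially the same route as the paper, whose entire proof is \enquote{perform the \textit{add\_edge} action on the column $(x',y')$}; you simply unpack that action (build the \textit{simple ramp of height $m-1$}, identify the resulting configuration with a \textit{simple ramp of height $m$} on $v_0,\dots,v_{n+1}$, then run the deconstruction strategy). Your extra checks of the \textit{add\_edge} preconditions and the $m=0$ case are harmless additions to the paper's argument.
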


\begin{proof}
    Perform the \textit{add\_edge} action on the column $(x', y')$. This provides the plan for its deconstruction.
\end{proof}

\begin{lemma}
    The maximum height of any ramp within a \textit{ramp area} of size $n$ is at most $n$ (where size is the vertex count).
    \label{lemma:max_ramp}
\end{lemma}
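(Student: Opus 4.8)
My plan is to turn the size of the ramp area directly into an upper bound on the attainable height, using the one structural constraint that drives everything in MACC: an agent changes its altitude by at most one per \emph{move}, and by the precondition of \emph{deliver} it can only add a block to a column standing at its own current altitude. Fix a ramp built inside a ramp area $R$ with $|V(R)| = n$, and suppose it attains height $h$ at some column $v$; the goal is to produce $h$ distinct vertices of $R$, which forces $h \le n$. Here I use that a ramp \emph{within} a ramp area occupies only vertices of $R$ (the incident border cells being used merely for \emph{enter}/\emph{leave}), so any interior column we exhibit automatically lies in $V(R)$. This generalizes the already-noted fact that a simple ramp of length $\ell$ peaks at height $\ell$ using exactly $\ell$ interior vertices.

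First I would look at the timestep at which the topmost block of $v$ is delivered. By the \emph{deliver} precondition there is, at that moment, an agent at a neighbor $u_1 \in \mathcal{N}_v$ standing at altitude $h-1$, so $u_1$ is itself a column of height $h-1$. Recursing on $u_1$ (its own topmost block was delivered from a neighbor $u_2$ of height $h-2$), and continuing down to height $1$, I get a walk $v = u_0, u_1, \dots, u_{h-1}$ in $G$ together with timesteps $t_0 > t_1 > \dots > t_{h-2}$ such that at time $t_i$ the column $u_i$ has height exactly $h-i$ while it receives its topmost block from the one-unit-higher support $u_{i+1}$. Since at time $t_i$ the column $u_i$ carries $h-i$ blocks whereas at the strictly later time $t_{i-1}$ the column $u_{i-1}$ carries $h-i+1$, the key point is that the $h$ columns $u_0, \dots, u_{h-1}$ are pairwise distinct; each is an interior vertex (it carries at least one block), hence lies in $V(R)$, and therefore $n \ge h$.

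The step I expect to be the main obstacle — and where the above sketch does the most work — is precisely the pairwise distinctness of $u_0,\dots,u_{h-1}$: because \emph{pick\_up} is available, column heights need not be monotone in time, so one cannot simply say ``heights only grow.'' The clean way to close this is to choose, in the recursion, the \emph{last} timestep before $t_{i-1}$ at which $u_i$ rises from height $h-i-1$ to height $h-i$; this makes the selected altitudes, and with them the times $t_i$, strictly decreasing, so the distinctness of the altitudes $h, h-1, \dots, 1$ forces distinctness of the vertices. One might also try a more static route — arguing that in a finished ramp every occupied column admits a path to a border cell along which the height drops by at most one per edge, giving a simple such path of length at least $h$ — but some care is needed that such a path still exists at the peak timestep, so I would keep the dynamic argument as the main line. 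Everything else (that the exhibited vertices are interior and contained in $R$, and the final inequality $h \le n$) is routine.
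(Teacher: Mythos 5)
There is a genuine gap at exactly the step you yourself flag as the main obstacle: the pairwise distinctness of $u_0,\dots,u_{h-1}$. Your fix --- choosing $t_i$ as the \emph{last} rise of $u_i$ to height $h-i$ before $t_{i-1}$ --- does make the timesteps strictly decreasing, but strictly decreasing times together with strictly decreasing target altitudes do not force the cells to be distinct, because each altitude is measured at a \emph{different} timestep and a single cell may legitimately occur twice in the chain at two different heights. Concretely, nothing in the construction excludes $u_i = u_j$ for $i < j-1$: that cell would rise to height $h-j$ at the early time $t_j$ (serving as the support in step $j$) and then simply keep growing so as to rise to height $h-i$ at the later time $t_i$ (serving as the support in step $i$). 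The sentence ``the distinctness of the altitudes $h, h-1, \dots, 1$ forces distinctness of the vertices'' is therefore a non sequitur; what you actually obtain is only a walk, and a walk of length $h$ in an $n$-vertex area yields no bound on $h$. Worse, ruling out such revisits is essentially the entire content of the lemma. For instance, in a two-cell ramp area $\{a,b\}$ with $b$ adjacent only to $a$, the candidate support chain for height $3$ at $b$ is $b, a, b$; showing this chain cannot be realized requires arguing about how $b$ climbs from height $1$ back up to height $2$ while $a$ is pinned at height at least $2$ --- that is, an argument about the intermediate layers of the revisited column, which your chain does not track.

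For comparison, the paper closes exactly this hole with a static counting argument on horizontal layers: by the gravity constraint the set of cells carrying a block at level $k+1$ is contained in the set carrying a block at level $k$, and the cell on which the agent stood when delivering the \emph{last} level-$(k+1)$ block lies in the latter set but never enters the former, so each layer is strictly smaller than the one below it; since the bottom layer has at most $n$ blocks, at most $n$ layers can be nonempty, bounding the height by $n$. If you want to keep your dynamic backward-chaining framework, you would need to augment each link with an accounting of how a revisited cell regains its intermediate levels (each such level again needs a same-height neighbor at the moment of delivery), and that accounting collapses back into the layer argument. As written, the proposal does not establish the lemma.
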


\begin{proof}
    Robot actions move, pick\_up, and deliver all interact with a neighbor position in graph $G$. A ramp can, therefore, use at most one \textit{ramp area}. Additionally, the deliver action requires two neighbor positions to have the same height, before the new block is placed. This causes each following block layer to be at least one block smaller (the position, where the agent stands, when it places the last block of the layer). The first layer can have up to $n$ blocks (the agent can place the last block from a neighboring border cell). Since each following layer is at least one block smaller, there can be at most $n$ layers with a non-zero number of blocks.
\end{proof}

Let $y_\text{min}$ be the minimum y-coordinate position of any node in $G_9$. Let $V_{y_\text{min}} = \{x : (x, y) \in V(G_9) \wedge y = y_\text{min}\}$ be the set of all vertices of $G_9$ at y position $y_\text{min}$. Let $x_\text{min}$ and $x_\text{max}$ be the minimum/maximum x-coordinate value of any node with a y-coordinate equal to $y_\text{min}$, respectively ($x_\text{min} = \min_{\upsilon \in V_{y_\text{min}}}{x(\upsilon)}$ and $x_\text{max} = \max_{\upsilon \in V_{y_\text{min}}}{x(\upsilon)}$). Let $v$ be the node at $(x_\text{min}, y_\text{min})$, let $v'$ be the node at $(x_\text{max}, y_\text{min})$. Let $w = (x_\text{min} + 1, y_\text{min})$ be the neighbor of $v$. The neighbor at this position must exist because $v$ by definition cannot have any neighbors in the direction of negative x and y axes and $\text{deg}(v) \ge 2$, so $\text{deg}(v) = 2$. Let $u$ be the second neighbor of $v$. Let $w' = (x_\text{max} - 1, y_\text{min})$ be the neighbor of $v'$.

Let us now define the target structure $S_{HC}$ as follows:
\begin{itemize}
    \item Let $(x, y) \in \mathcal{P}_\boxdot, \forall(x, y) \in V(G_9)$ and $y_\text{min} = 1$.
    \item There is a row of border cells at $y = 0$.
    \item $X = x_\text{max} + |V(G_9)| + 1$.
    \item $Y = 2 + \max{\{y : (x, y) \in V(G_9)\}}$.
    \item There is a column of height $|V(G_9)|$ at $v$.
    \item There is a column of height $(|V(G_9)| + 1)$ at $(x_\text{max} + 1, y_\text{min})$ (next to $v'$, where $v'$ did not have a neighbor).
    \item All other columns at positions $v'' \in V(G_9) - \{v\}$ have height 0.
    \item The remaining columns of the structure (except for border cells) have height $(|V(G_9)| + 2)$.
\end{itemize}

\begin{lemma}
    The decision problem of the existence of the Hamilton circuit in the $G_9$ graph is NP-complete.
    \label{lbl:g9NPcomplete}
\end{lemma}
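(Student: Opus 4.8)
The plan is to prove the two directions separately: the problem lies in NP, and it is NP-hard. Membership in NP is immediate. A Hamilton circuit of $G_9$ is certified by a cyclic list $u_0, u_1, \dots, u_{k-1}$ of its $k = |V(G_9)|$ vertices, a certificate of size polynomial in the description of $G_9$; a verifier checks in polynomial time that every vertex of $G_9$ appears exactly once in the list and that $\{u_i, u_{(i+1)\bmod k}\} \in E(G_9)$ for every $i$.

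For NP-hardness I would reduce from the Hamilton circuit problem on planar, bipartite graphs of maximum degree $3$, which is NP-complete (this is the source problem used in \cite{gridNPcomplete}). Given such a graph $H$ --- discarding the trivial no-instances, we may assume every vertex of $H$ has degree at least $2$, since otherwise $H$ has no Hamilton circuit and can be mapped to a fixed grid graph with no Hamilton circuit --- the reduction is exactly the grid embedding used to define $G_9$ in \cite{gridNPcomplete}: take a rectilinear planar drawing of $H$, blow it up by a sufficiently large integer factor, replace each vertex by a $3 \times 3$ block of grid cells, and realize each edge of $H$ as a width-one corridor of cells joining the two corresponding blocks, arranged so that distinct blocks and corridors are pairwise non-adjacent except where a corridor meets its two endpoint blocks. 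The output is a $G_9$ graph in the sense of the definition above (planar by construction, bipartite, maximum degree $\le 3$ at the block/corridor level inherited from $H$, and with no cell of degree below $2$), and it is computed in time polynomial in $|H|$.

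It remains to argue that $H$ has a Hamilton circuit if and only if the constructed $G_9$ does, and this is where the technical content sits; it is the correctness argument of \cite{gridNPcomplete}, which I would invoke rather than rebuild. In the forward direction, a Hamilton circuit $C$ of $H$ selects, at each vertex, exactly two incident edges; I would traverse the two corresponding corridors and route through the interior of the $3 \times 3$ block so that all nine of its cells are covered using only those two corridor attachments --- the gadget is designed so that such a routing exists for each of the finitely many admissible attachment patterns, which is where bipartiteness and the degree bound on $H$ are used. In the converse direction, one shows that any Hamilton circuit of $G_9$ must pass through each block using exactly two of its corridor attachments, so contracting blocks and corridors back down recovers a Hamilton circuit of $H$. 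I expect this block-gadget case analysis --- checking that the $3 \times 3$ block admits a full internal cover for each legal attachment pattern and for no illegal one --- to be the main obstacle, but since it is entirely inherited from \cite{gridNPcomplete} the reduction goes through, and together with membership in NP this establishes NP-completeness.
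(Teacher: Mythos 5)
Your proposal is correct and ultimately rests on the same foundation as the paper, which simply cites \cite{gridNPcomplete} for this lemma; you sketch the NP membership check and the reduction from Hamilton circuit on planar bipartite degree-$\le 3$ graphs in more detail, but you defer the essential block-gadget correctness argument to the same reference the paper invokes. No gap — this is the same approach, just with more of the cited construction spelled out.
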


\begin{proof}
    The proof is provided in the original paper describing the $G_9$ graph \cite{gridNPcomplete}.
\end{proof}

\begin{lemma}
    Structure $S_{HC}$ is buildable if and only if there exists a Hamilton circuit in $G_9$.
    \label{lbl:buildableG9}
\end{lemma}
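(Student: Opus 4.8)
\emph{Proof plan.}
The plan is to prove both implications, which together with Lemma~\ref{lbl:g9NPcomplete} yield NP-hardness of MACC buildability. Throughout I would argue in the mirror problem MACD, since $S_{HC}$ is buildable iff it is deconstructable, and Lemmas~\ref{lbl:deconstructablePath} and~\ref{lemma:max_ramp} are phrased for deconstruction. Write $N=|V(G_9)|$. The design intent of $S_{HC}$ is that the only height-$0$ interior cells form the ``canyon maze'' $V(G_9)\setminus\{v\}$ (the column at $v$ has height $N$), that this maze is the only ramp area reaching the rest of the structure, and that the towers of heights $N$, $N+1$, $N+2$ are chained so that deconstructing each one needs a strictly longer simple ramp than the previous one.

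For the direction ``Hamilton circuit $\Rightarrow$ buildable'': given a Hamilton circuit $H$ of $G_9$, I would deconstruct $S_{HC}$ in three stages. (i) Since $\deg(v)=2$, $H$ uses both edges $vw$ and $vu$, so $H-v$ is a Hamiltonian path of $G_9-v$ from $w$ to $u$; prepending the border cell $(x(w),0)$ turns it into a simple ramp of length $N-1$ whose last node $u$ is a neighbor of $v$, so by Lemma~\ref{lbl:deconstructablePath} (with $m=N\le (N-1)+1$) the column at $v$ is deconstructable, and the \emph{add\_edge} machinery leaves the whole maze at height $0$. (ii) With all $N$ maze cells empty, cutting $H$ at an edge incident to $v'$ yields a Hamiltonian path of $G_9$ with endpoint $v'$, which extends via a border cell to a simple ramp of length $N$ with top node $v'$ adjacent to the tower of height $N+1$; Lemma~\ref{lbl:deconstructablePath} (with $m=N+1\le N+1$) deconstructs that tower and restores the maze. (iii) Finally the plateau of height $N+2$ is peeled layer by layer: the tower of height $N+1$ supplies the topmost support level, a length-$N$ maze ramp ending at $v'$ supplies the rest, and peeling floods across the connected plateau until plateau and ramp alike are gone. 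Mirroring the whole schedule gives a construction plan.

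For ``buildable $\Rightarrow$ Hamilton circuit'' I would extract a circuit from any deconstruction schedule. (a) While the column at $v$ stands, the maze is a ramp area of only $N-1$ cells, so by Lemma~\ref{lemma:max_ramp} no maze ramp exceeds height $N-1$; hence no maze cell reaches height $N$ before the column at $v$ is removed. (b) Removing the top block of the column at $v$ requires an agent at height $N-1$ on a neighbor of $v$; the border cells adjacent to $v$ stay at height $0$, so that neighbor is $w$ or $u$, and attaining height $N-1$ inside an $(N-1)$-cell ramp area forces the ramp to be a full simple ramp, i.e.\ a Hamiltonian path of $G_9-v$ ending at that neighbor. (c) Deconstructing the tower of height $N+1$ (hence eventually the plateau of height $N+2$) forces, by the same counting argument together with the impossibility of support from border cells, a maze cell at height $N$, so the column at $v$ is already gone and there is a Hamiltonian path through all $N$ cells of $G_9$. (d) The degree-$2$ vertex $v$ upgrades the forced Hamiltonian path to a circuit: both edges at $v$ must be traversed as the $N$ blocks of $v$'s column are carried off, which pins the two ends of the forced Hamiltonian path of $G_9-v$ to $w$ and $u$, so inserting $v$ between them closes a Hamilton circuit of $G_9$.

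I expect step (d) — proving the forced Hamiltonian path really has \emph{both} endpoints at $v$'s neighbors, so that it closes into a circuit rather than merely a path — to be the main obstacle, together with the bookkeeping in (c) ruling out cheaper ways of supplying height (partially peeled plateau columns, border cells, or ramps stacked on intermediate structure). Lemma~\ref{lemma:max_ramp} and the rule that border cells cannot hold blocks are the tools that close these gaps.
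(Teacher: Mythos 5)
Your overall architecture matches the paper's: both directions are argued in the MACD mirror, the forward direction deconstructs $v$'s column via a Hamiltonian path of $G_9-v$ ending at a neighbor of $v$ (Lemma~\ref{lbl:deconstructablePath}), then the height-$(|V(G_9)|+1)$ tower via a Hamiltonian path ending at $v'$, then chains along the border row to peel the height-$(|V(G_9)|+2)$ plateau; this is essentially the paper's stage~(i)--(iii) (the paper makes your ``flooding'' concrete by first clearing the border row and erecting a border-adjacent ramp $p_\mathcal{B}$, which keeps the remaining area connected). In the converse direction your steps (a)--(b) also track the paper: Lemma~\ref{lemma:max_ramp} forces the first removed block to be the top of $v$'s column, and reaching height $|V(G_9)|-1$ in the $(|V(G_9)|-1)$-cell maze forces every maze cell to carry a distinct height, hence a Hamiltonian staircase. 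You force the path at positioning time (the agent must climb a monotone staircase through the unique cell of each height); the paper forces it at descent time (the agent holding the picked-up block has only the top two columns available as drop-off points, so a non-path ramp strands it). Either works, but your one-line assertion that height $|V(G_9)|-1$ ``forces a full simple ramp'' needs the explicit distinct-heights/one-free-top-per-layer count that the paper supplies inside the proof of Lemma~\ref{lemma:max_ramp}.

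The genuine gap is your step (d). The claim that ``both edges at $v$ must be traversed as the $N$ blocks of $v$'s column are carried off'' is false: an agent can remove every block of $v$'s column from the \emph{same} neighbor, repeatedly lowering the staircase by one between pickups, so the edge $\{v,w\}$ (say) need never be used. Consequently this argument does not pin the second endpoint of the forced Hamiltonian path of $G_9-v$ to the other neighbor of $v$, and a Hamiltonian path of $G_9-v$ with only one endpoint adjacent to $v$ yields a Hamiltonian \emph{path} of $G_9$, not a circuit. What has to be shown is that the border-adjacent end of the forced staircase is the neighbor $w$ of $v$ (the paper closes the circuit exactly by adjoining the edge $\{v,w\}$ to the ramp path, i.e., it takes the ramp to start at the border cell below $w$); you need an argument constraining which maze cells can serve as the foot of the ramp, which neither your ``both edges traversed'' mechanism nor your step (c) provides. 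Everything else in the proposal is recoverable, but as written the reduction only establishes Hamiltonian-path existence in the backward direction.
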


\begin{proof}
    Let us first prove, that if there exists a Hamilton circuit in $G_9$, then $S_{HC}$ is \textit{buildable}.

    The Hamilton circuit must go through all nodes of $G_9$. Because $\text{deg}(v) = 2$, it must go through both of its edges $\{u, v\}, \{v, w\} \in E(G_9)$. If we leave out the edge $\{v, w\}$, we get a Hamilton path of length $|V(G_9)|$, with one end at $w$. and the second end at $v$, with $u$ preceding the node $v$ on this path. Because $w$ is next to a border cell, there exists a path of length $(|V(G_9)|-1)$ from $w$ to $u$. This path leads only over \textit{empty} cells, because the only cell with some blocks is $v$, which is not part of the path. Using lemma \ref{lbl:deconstructablePath}, we deconstruct the column at $v$, leaving \textit{empty} cell.

    Similarly, if we remove the edge $\{w', v'\}$ from the Hamilton circuit, we get a path starting at $w'$ next to a border cell, going over empty cells (including now empty $v$) and ending at $v'$. The length of this path is $|V(G_9)|$. We can now once again use the lemma \ref{lbl:deconstructablePath} to remove the column at $(x_\text{max} + 1, y_\text{min})$ with height $(|V(G_9)|+1)$. Additionally, since by adding a leaf to the end of a path, the graph remains a path, we can use the new path of length $(|V(G_9)|+1)$, including the position $(x_\text{max} + 1, y_\text{min})$ to access position $(x_\text{max} + 2, y_\text{min})$ with height $(|V(G_9)|+1)$. By repeating this process, we can clear the path of length $|V(G_9)|$ along border. This path must exist, because $v'$ is the \textit{empty} cell at maximum x position within the row at $y_\text{min}$ and so all cells in the direction of the positive x coordinate from its position must have height $(|V(G_9)|+2)$ by definition of $S_{HC}$.

    After clearing out the path, we can build a new path $p_\mathcal{B}$ from $(X-2, y_\text{min})$ to $(x_\text{max}, y_\text{min})$ with length $(|V(G_9)|+1)$. Because $p_\mathcal{B}$ is located next to the border, it leaves the rest of the building area a connected graph. We can therefore plan shortest paths from the end of $p_\mathcal{B}$ to the rest of the building area and remove the rest of the columns belonging to the structure $S_{HC}$. Structure $S_{HC}$ is deconstructable in MACD, therefore it is buildable in the MACC.

    Let us now prove, that if the $S_{HC}$ structure is buildable, then there exists a Hamilton circuit in $G_9$.

    Proof by contradiction: Let us assume, that the $S_{HC}$ structure is buildable and there is not a Hamilton circuit in $G_9$. The structure $S_{HC}$ is buildable, if and only if it is deconstructable. There must be a first block that is removed from the structure. This block must be the top block of the column at $v$, because it is the only top of a column reachable according to the maximum ramp height given by lemma \ref{lemma:max_ramp}, which is $(|V(G_9)|-1)$ (the size of the ramp area), so it can reach only columns with height at most $|V(G_9)|$). Let us assume that such a ramp can be built and that it can reach $v$. This ramp cannot be a path, because then we could use the path of the ramp, along with the edge $\{v, w\}$ as a Hamilton circuit. This means, when deconstructing in MACD (and suppose that the agent managed to move next to the top block and pick it up), that the agent cannot carry its block from the top to a border cell without moving the ramp blocks along the way. But the only two positions, where the agent can place its block are the top of the ramp $v_{n+1}$ and its neighbor $v_{n}$, since each layer below has only one block with an unoccupied top (according to the proof of lemma \ref{lemma:max_ramp}). Since the surrounding structure columns have height $(|V(G_9)| + 2)$, the agent cannot climb on top of the structure. And since the agent holds a block, it cannot move any part of the ramp without leaving its block back at the top. So the only way for the agent to carry the block downward is to straighten the ramp into a path, which is in contradiction with the assumption that the ramp is not a path.
\end{proof}

\begin{lemma}
    The decision problem, if the given structure is \textit{buildable} in MACC, is NP-hard.
    \label{lemma:buildableNPhard}
\end{lemma}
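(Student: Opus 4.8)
The plan is to establish NP-hardness by a polynomial-time reduction from the Hamilton circuit problem on the grid graph $G_9$, which is NP-complete by Lemma \ref{lbl:g9NPcomplete}. Given an instance $G_9$, I would construct the MACC target structure $S_{HC}$ exactly as specified before Lemma \ref{lbl:buildableG9}: place the embedding of $G_9$ inside $\mathcal{P}_\boxdot$ with $y_\text{min} = 1$, add the border row at $y = 0$, set the grid dimensions $X = x_\text{max} + |V(G_9)| + 1$ and $Y = 2 + \max\{y : (x,y) \in V(G_9)\}$, put a column of height $|V(G_9)|$ at $v$, a column of height $|V(G_9)|+1$ at $(x_\text{max}+1, y_\text{min})$, height $0$ at every other node of $G_9$, and height $|V(G_9)|+2$ at all remaining non-border cells.

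The core of the argument is already done: Lemma \ref{lbl:buildableG9} states precisely that $S_{HC}$ is buildable in MACC if and only if $G_9$ has a Hamilton circuit. So after presenting the construction, I would simply invoke Lemma \ref{lbl:buildableG9} to conclude that deciding buildability of $S_{HC}$ decides the existence of a Hamilton circuit in $G_9$. Combined with the NP-completeness of the latter (Lemma \ref{lbl:g9NPcomplete}), this shows that the buildability decision problem is NP-hard.

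The remaining obligation, and the only real work in this proof, is to verify that the reduction runs in polynomial time: the grid dimensions $X, Y$ are linear in $|V(G_9)|$ and the dimensions of the $G_9$ embedding, so the size of $\mathcal{C}$ (and hence the encoding of the target heights) is polynomial in $|V(G_9)|$, and each column height is bounded by $|V(G_9)|+2$, so it can be written down in polynomial time. I would state this explicitly and note that the map from $G_9$ to $S_{HC}$ is clearly computable in time polynomial in the input size.

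I expect no genuine obstacle here, since the hard content has been isolated into Lemmas \ref{lemma:max_ramp} and \ref{lbl:buildableG9}; the main thing to be careful about is being explicit that ``NP-hard'' for a planning problem whose natural solutions may be exponentially long is still meaningful as stated — here it suffices because the reduction target is a \emph{decision} problem (buildability), and its NP-hardness follows purely from the polynomial-time reduction together with Lemma \ref{lbl:buildableG9}, independent of any certificate-size considerations.
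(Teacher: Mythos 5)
Your proposal is correct and follows essentially the same route as the paper: a polynomial-time reduction from the Hamilton circuit problem on $G_9$ (Lemma \ref{lbl:g9NPcomplete}) via the structure $S_{HC}$, with correctness supplied by Lemma \ref{lbl:buildableG9}. Your additional explicit check that the reduction is polynomial-time (grid dimensions and column heights bounded polynomially in $|V(G_9)|$) is a welcome detail the paper leaves implicit.
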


\begin{proof}
    Because an NP-complete problem (see lemma \ref{lbl:g9NPcomplete}) can be polynomially transformed into the decision problem if a structure is \textit{buildable} (see lemma \ref{lbl:buildableG9}), the buildability decision problem is NP-hard.
\end{proof}

We must note, that the only reason for NP-hardness, is that the agent cannot climb down with its block, because there are no places below the top for it to lay its block and reconstruct the ramp. If we provide such a flat area, there may be a way to still build to a reasonable height while not requiring the search for the Hamilton path. And as we show in the next section, such a way exists and we use it as a basis for our ReRamp algorithm.


\section{ReRamp algorithm}
The previous section shows that the full utilization of a \textit{ramp area} is possible only when there exists a Hamiltonian path, making the problem of structure constructability NP-hard (see lemma \ref{lemma:buildableNPhard}). However, there is a way if we do not require the full use of the space. We propose a novel construction strategy, utilizing off-path space by constructing reversible ramps to the sides of the main ramp. Each side ramp, when reversed, allows the robot to continue from its top (shown in figures \ref{fig:forward-side-ramp} and \ref{fig:backward-side-ramp}). This strategy permits the agent to build much higher ramps, without the need to search for a Hamilton path.

\begin{figure}
    \centering
    \includegraphics[width=0.65\columnwidth]{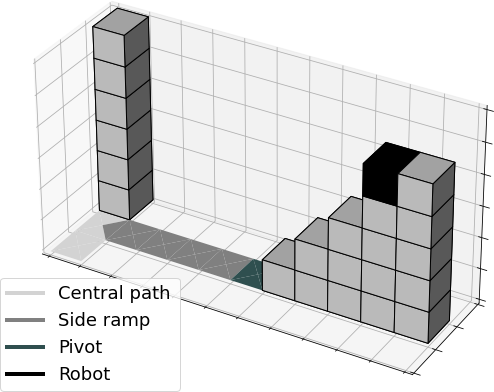}
    \vspace{5pt}
    \caption{Ramp with forward side ramp.}
    \vspace{10pt}
    \label{fig:forward-side-ramp}
\end{figure}

\begin{figure}
    \centering
    \includegraphics[width=0.65\columnwidth]{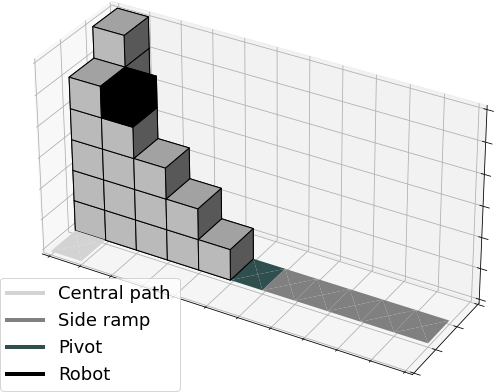}
    \vspace{5pt}
    \caption{Ramp with reversed (backward) side ramp.}
    \vspace{15pt}
    \label{fig:backward-side-ramp}
\end{figure}

\begin{definition}
    The ramp central path (denoted by $p$) is a non-empty sequence of positions $v_0, \dots, v_n$, where $v_0 \in \mathcal{P}$ and $\forall i \in \{1, \dots, n\}: v_i \in \mathcal{P}_\boxdot$. The connection point $\connectionpoint{p} = v_0$ is the first position of $p$. Let $l_p$ be the last position of $p$.
\end{definition}
\begin{definition}
    The (compound) ramp is a tuple $(p, \mathcal{S}, f)$, where $p$ is the ramp's central path, $\mathcal{S}$ is a sequence of reversible ramps connected to the ramp $(p, \mathcal{S}, f)$ by the first node of their central path and $f \in \mathbb{N}$ is the floor height of the ramp. Let $\mathcal{M}$ be the set of all ramps.
\end{definition}
Let the relative height of a ramp $r = (p, \mathcal{S}, f)$ at timestep $t$ and point $(x, y)$ be $\rh{t}{r}{x, y} = z_{t, x, y} - f$. Let the relative reversible height $\rrh{t}{s}$ of reversible ramp $s$ at timestep $t$ be the relative height of $s$ in its backward state at timestep $t$, if all its backward side ramps were also reversed.
\begin{definition}
    The ramp $(p, \mathcal{S}, f)$ is \textit{valid}, iff all its side ramps are reversible and at any given timestep $t \in \mathcal{T}$ all positions $(x, y) \in p$ with a successor $(x', y') \in p$ meet the condition $0 \leq z_{t, x', y'} - z_{t, x, y} \leq 1 + \sum_{s \in \mathcal{S}: \connectionpoint{s} = (x, y)}{}$.
\end{definition}
\begin{definition}
    A ramp $s = (p, \mathcal{S}, f)$ is reversible if it is \textit{valid} and there exists a sequence of actions for a tuple of ramps $((p_f, s_f, f), (p_b, s_b, f))$ to move all the blocks from the first ramp to the second ramp. In the tuple, $(p_f, s_f, f) \in \mathcal{M}$ is the forward ramp (starting with all the blocks belonging to $s$) and $(p_b, s_b, f) \in \mathcal{M}$ is the backward ramp (starting without blocks). Both share the floor height with the ramp $s$.
\end{definition}

To move along the central path of a ramp up, reverse one-by-one all the side ramps connecting to the current agent position (in the order they appear in $\mathcal{S}$) to the backward state, stand at the highest block of the last reversible side ramp and go to the next central path node. To move along the central path down, perform this process in reverse.

Let ramp capacity be the number of blocks on the ramp above its floor height when the ramp is considered to be full. Let reversible capacity be the capacity of the backward ramp $(p_b, s_b, f)$ in the reversible ramp. We consider the ramp full when all side ramps are at reversible capacity, the difference between each node $v_i$ of the central path and its successor $v_{i+1}$ is equal to $(1 + \sum_{s' \in \mathcal{S}: \connectionpoint{s'} = v_i}{\mrrh{s}})$ and, when all the potential side ramps connected to the last node $l_p$ are reversed, the last one reaches $\sum_{s' \in \mathcal{S}: \connectionpoint{s'} = l_p}{\mrrh{s}})$ relative to the floor height of the first one.

Let the maximum reversible height $\mrrh{s}$ of ramp $s$ be the maximum theoretical $\rrh{t}{s}, \forall t \in \mathcal{T}$, which we know is still reversible. In our case, we get this by extending the backward ramp $(p_b, s_b, f)$ from one side of the central path of $s$ and extending the forward ramp from the other side of the central path of $s$, while keeping the backward ramp at most at equal capacity to the forward ramp, until the connection points of both ramps meet at the pivot. We then assign each of the side ramps connected at the pivot to either the backward ramp or the forward ramp to maximize the backward ramp height while keeping the capacity constraint. We then repeat this procedure (the growth of the backward ramp) for the backward ramp without side ramps, the forward ramp still has side ramps. With central path $v_0, \dots, v_n$ of length $n$, the backward ramp has height at least $\lfloor n / 2\rfloor$ blocks (because the worst-case forward ramp is without side ramps, which also has $\lfloor n / 2\rfloor$ block height). Choose the backward ramp with the larger height for the reversible ramp.

Let a 2-flat area on a ramp be two successive nodes, where the height of the first one, together with the relative reversible height of the side ramps connected at the first node $v_i$, is equal to the height of the second node $v_{i+1}$ (without side ramps).

\begin{definition}
    The ramp-building strategy in the first step examines the ramp's central path in the backward direction and searches for a 2-flat area. When found, it marks the second node as a target for block placements.

    When a position on the central path is marked for block placements, newly arrived blocks are placed first on the side ramps connected to the marked position, in the reverse order of $\mathcal{S}$ -- if the ramp is not at reversible capacity, then to expand the ramp, otherwise to raise the side ramp floor (placing a block at every column and increasing $f$ by one). Once all ramps are at their capacity and have raised floor, place the block at the target block placement position on the central path and unmark the position. If an expanded side ramp reaches a new reversible relative height, also unmark the position.

    Return to the first step, until the ramp reaches its capacity (reversible capacity in case of reversible ramps).
\end{definition}

The high-level action \textit{add\_edge} for the ramp consists of four steps. First, we create an empty ramp based on a pre-computed spanning tree of a ramp area. Next, we use the ramp-building strategy to increase the ramp height until it reaches (including its reversed side-ramps) height $(m-1)$ (where $m$ is the height of the column targeted for deconstruction). As the third step, we append the target column to the central path. Finally, we use the reverse version of the ramp-building strategy (i.e. the ramp deconstruction strategy) to remove all blocks from the newly extended ramp.

The first step is new, so it requires a more detailed look. At the start, we have a spanning tree of the \textit{ramp area}, with an additional leaf node in the border cell $v_0$ and the position $v_{n+1}$ of the target column we intend to disassemble. We first create the ramp on the spanning tree using the Ramp from tree (RFT) function (see algorithm \ref{alg:rft}; the connection point is the spanning tree root). In short, we use the path from the border cell to the neighbor $v_n$ of $v_{n+1}$ as the central path and we apply the algorithm \ref{alg:rft} to the longest path for each of the side branches (trivial in a spanning tree) until no side branches remain or the maximum recursion of the algorithm \ref{alg:rft} is reached. The maximum recursion of the algorithm \ref{alg:rft} is given by the user and is important for the complexity analysis of the ReRamp algorithm.

\begin{algorithm}
\caption{Ramp from tree (RFT) function.}
\label{alg:rft}
    \SetAlgoLined
    \KwData{rooted tree graph, $v_n$, max side-ramp recursion $i_r$}
    \KwResult{ramp from root node to $v_n$}
    initialization\;
    $v_i \gets v_n$\;
    \While{$v_i$ is not root}{
        add $v_i$ to central path $p$\;

        \If{$i_r > 0$}{
            \ForEach{child node $v_c$ of $v_i$, not in central path $p$}{
                $\text{child\_tree} \gets \text{rooted tree graph with }v_i\text{ as root}$\;
                $n_d \gets \text{farthest node from }v_c\text{ not crossing }v_i$\;
                $\text{side\_ramp} \gets \text{RFT(child\_tree, }n_d\text{, }i_r-1\text{)}$\;
                \If{side\_ramp reversible height > 0}{
                    add side\_ramp to $\mathcal{S}$\;
                }
            }
        }
        
        $v_i \gets \text{parent(}v_i\text{)}$\;
    }
    add $v_i$ to central path $p$\;
    \Return{ramp $(p, \mathcal{S}, 0)$}\;
    \vspace{5pt}
\end{algorithm}

Now we finally get to the ReRamp algorithm itself. The algorithm \ref{alg:multireramp} consists of a single-agent procedure applied to the decomposition of the target structure to achieve multi-agent parallelism. The decomposition is done by assigning each node to the closest border cell. One agent is then assigned to each border cell and its associated area. For each agent, repeat disassembly of the target structure using the single agent ReRamp subroutine (algorithm \ref{alg:singlereramp}) until no change in structure occurs (the area is \enquote{frozen}). While there is more than one agent, connect two neighbor frozen areas (the ones, which became frozen first), unfreeze the new area, and return to disassembling it until it is frozen again. Since the grid is a connected graph, we end with one large frozen area and one agent. Therefore, the worst-case analysis of deconstructable structure height for the single-agent ReRamp subroutine also applies to the multi-agent ReRamp algorithm.

The single-agent ReRamp subroutine (algorithm \ref{alg:singlereramp}) first constructs a spanning tree using a modified depth-first search algorithm. The modified DFS (mentioned as 
\enquote{simple\_deconstruct\_DFS} in algorithm \ref{alg:singlereramp}) starts at a border cell and encompasses the ramp area neighboring the border cell. It also encompasses the structure columns deconstructable using a simple ramp from the border cell to the target structure column, going over the spanning tree. The simple ramp is constructed up to the neighbor of the target column and the \textit{add\_edge} high-level action is used to include the structure column in the simple ramp (as well as in the spanning tree). However, the simple ramp is not immediately fully deconstructed -- if the following node of the DFS is neighboring the simple ramp end $v_{n+1}$ and is also deconstructable using a simple ramp over the spanning tree, then the ramp height is merely adjusted to perform another \textit{add\_edge}. If the simple ramp is too high, blocks are removed using the simple ramp deconstruction strategy. If the simple ramp is too low, blocks are added using the simple ramp-building strategy. If the following node of the DFS is not neighboring the end of the simple ramp, it is fully deconstructed using the simple deconstruction strategy. The full deconstruction also happens at the end of the simple\_deconstruct\_DFS, leaving the spanning tree as an area without blocks.

If a structure column $w_i$, found by the simple\_deconstruct\_DFS, is too high for deconstruction using the simple ramp over the spanning tree, reference to its neighbor is saved, along with $w_i$ position. Once the creation of the spanning tree is complete, the single-agent ReRamp subroutine tries to disassemble every found structure column from each one of its neighbors within the spanning tree, using the RFT function (algorithm \ref{alg:rft}) on the spanning tree (rooted in the border cell), the neighbor node as $v_n$ and max side-ramp recursion given by the user. The resulting ramp is then used by deconstruct\_DFS, in much the same fashion as the simple ramp in simple\_deconstruct\_DFS. In short, for the disassembly of a structure column at the end of the ramp, the ramp is adjusted in height by block additions/removals to ensure its full height reaches one block below the target structure column height. The target column is added to the ramp using \textit{add\_edge} high-level action, and the target column's position is also added to the spanning tree. If the target for disassembly is not at the ramp end, remove all blocks from the ramp and move the ramp end accordingly. This is done until the deconstruct\_DFS algorithm does not exhaust all the structure columns it can reach with the ramp. Then the ramp is fully disassembled.

\begin{lemma}
    The multi-agent ReRamp algorithm has polynomial complexity $O(n^{4.5} \cdot n^{3i_r})$, where $n$ is the construction area size (vertex count) and $i_r$ is the max side-ramp recursion index, set as part of the user input.
    \label{lemma:polynomialReRamp}
\end{lemma}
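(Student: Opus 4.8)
The plan is to bound the running time in three nested layers that mirror the structure of the method --- the multi-agent wrapper (Algorithm~\ref{alg:multireramp}), the single-agent subroutine (Algorithm~\ref{alg:singlereramp}), and the recursive ramp-from-tree construction (Algorithm~\ref{alg:rft}) --- and to isolate the side-ramp recursion index $i_r$ as the only source of the factor $n^{3i_r}$. Since $i_r$ is a fixed user parameter it is a constant, so once every layer is shown to cost a polynomial in $n$, with $i_r$ entering only through the recursion depth, the overall bound is polynomial.

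\textbf{From the wrapper to the single-agent subroutine.} First I would show that Algorithm~\ref{alg:singlereramp} is invoked only polynomially often. The initial decomposition, assigning each of the at most $n$ cells to a nearest border cell, is a single multi-source breadth-first search and costs $O(n)$. After that the wrapper alternates running the subroutine on the current area until it is frozen with merging two neighbouring frozen areas; there are $O(n)$ merges, since each merge decreases the number of areas by one and there are at most $n$ areas initially. Within a ``run until frozen'' phase every invocation except the terminating one removes at least one target block, and by Lemma~\ref{lemma:max_ramp} a ramp area of size $k\le n$ hosts a ramp of height at most $k$, so only $O(k^2)=O(n^2)$ target blocks ever lie above reachable cells; hence the subroutine is called $O(\mathrm{poly}(n))$ times in total. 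It therefore suffices to bound one invocation and multiply by this polynomial count.

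\textbf{One invocation.} A single invocation (i) builds a spanning tree by the modified DFS \emph{simple\_deconstruct\_DFS}, an $O(n)$-node walk that resizes a simple ramp and performs \textit{add\_edge} at visited nodes, and (ii) for each of the $O(n)$ recorded too-high columns and each of its $O(1)$ tree neighbours calls Algorithm~\ref{alg:rft} and runs \emph{deconstruct\_DFS}, again an $O(n)$-node walk but now on a compound ramp. The graph primitives involved --- rooting the tree, farthest-node queries, the DFS itself, spanning-tree construction --- are all (near-)linear on trees and contribute only polynomial factors. A ramp in an area of size $\le n$ holds $O(n^2)$ blocks (Lemma~\ref{lemma:max_ramp}), and consecutive \textit{add\_edge} operations merely resize it, so the number of single-block moves over an entire DFS pass amortizes to a polynomial rather than $O(n)$ full rebuilds; each single-block high-level action expands to $O(n)$ low-level steps on a simple ramp. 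Thus one invocation costs a polynomial in $n$ times the cost of operating a compound ramp of recursion depth $i_r$, which is the last thing to bound.

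\textbf{The recursion and the factor $n^{3i_r}$.} This is the crux. Let $C_k$ be the worst-case cost of building, fully reversing, or deconstructing a ramp of side-ramp recursion depth $k$ on a $\le n$-vertex tree. Moving the agent one node along the central path past a node carrying side ramps reverses those side ramps one after another, and each such reversal is a depth-$(k-1)$ operation; since a depth-$k$ ramp holds $O(n^2)$ blocks and relocating one block traverses the central path in $O(n)$ steps, we get $C_k = O(n^3)\cdot C_{k-1}$ with base case $C_0 = O(n^3)$ (a plain simple ramp). Because Algorithm~\ref{alg:rft} stops recursing when $i_r$ reaches $0$, the nesting is at most $i_r$ deep, so $C_{i_r}=O(n^{3}\cdot n^{3i_r})$, which supplies the factor $n^{3i_r}$. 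It then remains to multiply $C_{i_r}$ by the polynomial prefactors collected above --- the $O(\mathrm{poly}(n))$ invocations, the $O(n)$ DFS nodes and recorded columns per invocation, and the globally charged (not per-invocation) amortized count of single-block moves with their $O(n)$-step low-level expansion --- and to verify that these factors combine to exactly $n^{4.5}$; that accounting is routine but delicate, since a naive ``(DFS nodes)~$\times$~(full ramp build)'' estimate would overshoot. The genuine obstacle, however, is establishing the recurrence $C_k=O(n^3)C_{k-1}$ itself: one must show that a compound-ramp reversal decomposes into $O(n^2)$ single-block moves, each costing $O(n)$ traversal steps plus $O(1)$ depth-$(k-1)$ reversals, that the side ramps attached at a node are mutually independent and may be reversed in sequence, and that the building strategy never re-reverses a ramp, so the $O(n^2)$ block count is not inflated.
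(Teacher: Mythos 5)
Your layered decomposition and, in particular, your recurrence $C_k = O(n^3)\,C_{k-1}$ with $C_0=O(n^3)$ are exactly the paper's argument: a compound move at recursion depth $k$ costs $O(1)$ reversals of depth-$(k-1)$ side ramps, each reversal is a full deconstruct-plus-build costing $O(n^3)$ compound moves at that depth, and the per-invocation cost of the single-agent subroutine comes out to $O(n^4\, n^{3i_r})$ (the paper, like you, uses the naive DFS-nodes-times-full-rebuild product for the simple-ramp pass and amortizes the compound-ramp pass over the $O(n)$ tall columns, each handled once). So the $n^{3i_r}$ factor and the $n^4$ factor are both in order.

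The genuine gap is the remaining exponent: your proposal never produces the $n^{0.5}$, and as written it cannot. You bound the number of areas by $n$ (``each merge decreases the number of areas by one and there are at most $n$ areas initially''), which would give $O(n)$ top-level invocations and hence $O(n^5\, n^{3i_r})$; you then retreat to ``$O(\mathrm{poly}(n))$ invocations'' and explicitly defer the check that everything ``combines to exactly $n^{4.5}$.'' That deferral is precisely where the lemma's constant lives. The paper's decomposition assigns one area to each \emph{border cell}, and for a building area of $n$ cells the border has only $\Theta(\sqrt{n})$ cells; hence there are $O(\sqrt{n})$ initial areas, $O(\sqrt{n})$ merges, and $O(\sqrt{n})$ calls to the $O(n^4\, n^{3i_r})$ single-agent procedure, which is the entire source of $n^{4.5}$. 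Your separate worry about the ``run until frozen'' loop multiplying the count further is a fair observation (the paper simply folds it into the $O(\sqrt{n})$ count), but to prove the stated bound you must (i) replace the $O(n)$ area count by the border-cell count $O(\sqrt{n})$ and (ii) actually carry out the multiplication rather than flag it as delicate; a proof that establishes only ``some polynomial'' does not establish $O(n^{4.5}\cdot n^{3i_r})$.
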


\begin{proof}
    Let us first determine the complexity of worst-case ramp full construction and deconstruction.

    \begin{itemize}
        \item ramp with $i_r = 0$
        \begin{itemize}
            \item add\_block $O(n)$ \textit{move} actions + $O(1)$ other actions
            \item remove\_block $O(n)$ \textit{move} actions + $O(1)$ other actions
            \item build\_ramp $O(n^2)$ add\_block actions
            
            (i.e. $O(n^3)$ \textit{move} actions)
            \item deconstruct\_ramp $O(n^2)$ remove\_block actions
            
            (i.e. $O(n^3)$ \textit{move} actions)
            \item pass\_up/pass\_down
            
            (reverse as side ramp when moving up/down, respectively)

            deconstruct\_ramp + build\_ramp (i.e. $O(n^3)$ \textit{move} actions)
        \end{itemize}
        \item ramp with $i_r > 0$
        \begin{itemize}
            \item c\_move (compound move) action
            
            $O(1)$ pass\_up actions for $i_r'=(i_r-1)$ sub ramps + $O(1)$ pass\_down actions for $i_r'$ sub ramps + $O(1)$ move actions
            (if we analyze the recursive calls, we get $O(n^{3i_r})$ \textit{move} actions)
            \item add\_block $O(n)$ c\_move actions + $O(1)$ other actions
            \item remove\_block $O(n)$ c\_move actions + $O(1)$ other actions
            \item build\_ramp $O(n^2)$ add\_block actions
            
            (i.e. $O(n^3)$ c\_move actions)
            \item deconstruct\_ramp $O(n^2)$ remove\_block actions
            
            (i.e. $O(n^3)$ c\_move actions)
            \item pass\_up/pass\_down
            
            (reverse as side ramp when moving up/down, respectively)

            deconstruct\_ramp + build\_ramp (i.e. $O(n^3)$ c\_move actions)
        \end{itemize}
    \end{itemize}
    
    Now let us first determine the complexity of the single-agent ReRamp procedure (algorithm \ref{alg:singlereramp}).

    \begin{itemize}
        \item DFS (for every node may completely build and deconstruct a simple ramp with $i_r = 0$)

        $O(n) \cdot (O(n^3) + O(n^3)) = O(n^4)$
        \item All simple ramp blocks must be removed $O(n^3) \cdot O(n^{3i_r}) = O(n^3 n^{3i_r})$
        \item For every remaining column of the structure (i.e. $O(n)$) do
        \begin{itemize}
            \item RFT (algorithm $\ref{alg:rft}$) inspects every node once $O(n)$
            \item DFS (for every remaining column may completely build and deconstruct a ramp with $i_r$ given by the user -- but then the column is gone, so it is done for once from the perspective of the loop)

            $O(n) + O(1) \cdot (O(n^3) + O(n^3)) \cdot O(n^{3i_r}) = O(n^3 n^{3i_r})$
            \item Then all ramp blocks must be removed $O(n^3) \cdot O(n^{3i_r}) = O(n^3 n^{3i_r})$
        \end{itemize}
    \end{itemize}

    The largest complexity has the DFS algorithm in the loop (along with the following ramp deconstruction) with $O(n^4 n^{3i_r})$. After summing all the complexities, the overall asymptotic complexity of the single-agent ReRamp procedure is $O(n^4 n^{3i_r})$.

    The multi-agent ReRamp algorithm (algorithm \ref{alg:multireramp}) first calls the single agent procedure on $O(\sqrt{n})$ sub-areas of size $O(\sqrt{n})$. Then it connects two areas and runs the single agent procedure on the combined area, until all areas are combined into one. In total, it runs $O(\sqrt{n})$ times and calls the single agent procedure with complexity $O(n^4 n^{3i_r})$. The total complexity of the multi-agent ReRamp algorithm is therefore $O(n^{4.5} \cdot n^{3i_r})$. Because $i_r$ is a constant, set by the user, the algorithm \ref{alg:multireramp} is polynomial.
\end{proof}

\begin{lemma}
    The worst-case ReRamp algorithm maximum construction height is $\Omega(\sqrt{|a|})$ (using the Knuth definition \cite{Knuth1976BigOA}), where $|a|$ is the size (vertex count) of ramp area $a$.
    \label{lemma:worst_case_area}
\end{lemma}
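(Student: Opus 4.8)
The plan is to prove the bound constructively: given an arbitrary ramp area $a$ I will exhibit a target column of height $\Omega(\sqrt{|a|})$, adjacent to $a$, that the ReRamp algorithm provably deconstructs and hence builds. Since the text already reduces the worst-case deconstructable height of the multi-agent algorithm to that of the single-agent subroutine, and since a simple ramp is exactly the degenerate compound ramp with $\mathcal{S}=\emptyset$ (so the argument will use no side ramps and will not depend on $i_r$), it suffices to reason about one agent, one ramp area $a$, and one adjacent border cell $b$, at which the subroutine roots its modified DFS over $a$.

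The geometric core is a ball-packing estimate. Let $r$ be the eccentricity of $b$ in the subgraph of $G$ induced on $a\cup\{b\}$. Every cell of $a$ is then within graph distance $r$ of $b$, hence within $L_1$ distance $r$ of $b$ in the plane, so $a\cup\{b\}$ lies inside the $L_1$ ball of radius $r$ about $b$, which contains exactly $2r^2+2r+1$ cells. Thus $|a|+1\le 2r^2+2r+1$, which forces $r\ge\tfrac12\sqrt{|a|}$ for every non-empty $a$. I then refine this to a cell $w\in a$ that is at distance $\Omega(\sqrt{|a|})$ from $b$ \emph{and} is adjacent to a cell outside $a$: since $a$ is carved out of the building area and has diameter $\Omega(\sqrt{|a|})$, there are two cells of $a$ at mutual graph distance $\Omega(\sqrt{|a|})$, the one farther from $b$ is not a cut vertex of $a$, and from it one extracts such a boundary cell $w$ at distance $\Omega(\sqrt{|a|})$ from $b$.

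Now the modified DFS spans all of $a$ and is rooted at $b$, and in any DFS tree the tree-path to a vertex is at least as long as the shortest path to it; hence $w$ sits at DFS-depth $\ge\mathrm{dist}(b,w)=\Omega(\sqrt{|a|})$, and the tree path $v_0=b,v_1,\dots,v_n=w$ is a simple path of $G$ that starts at the border cell and otherwise stays in $\mathcal{P}_\boxdot$ (a ramp area contains no border cells), i.e.\ a \emph{simple ramp} $s_r$ with $l(s_r)=n=\Omega(\sqrt{|a|})$. Place a target column of height $m=n+1$ at the cell of $\mathcal{N}_{v_n}$ witnessing that $w$ is a boundary cell. By Lemma~\ref{lbl:deconstructablePath}, applied through the \textit{add\_edge} high-level action --- which for $\mathcal{S}=\emptyset$ collapses to exactly the simple-ramp construction the DFS performs when it first meets a reachable column --- this column is deconstructable, hence buildable, so the ReRamp algorithm constructs a structure of height $m=\Omega(\sqrt{|a|})$ using only the ramp area $a$. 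As $a$ was arbitrary of the stated size, the worst-case maximum construction height is $\Omega(\sqrt{|a|})$ in the Knuth sense.

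The delicate step is the second one: not the root-eccentricity bound $r\ge\tfrac12\sqrt{|a|}$, which is immediate from ball packing, but the refinement that a cell at distance $\Omega(\sqrt{|a|})$ from $b$ can be chosen on the boundary of $a$, so that a tall target column can actually be attached there --- this is what closes the ``column hidden in the interior of $a$'' loophole and is the only place a genuine (short) isoperimetric/diameter argument is needed. Everything else is routine: the DFS-tree path touches the border only at its endpoint because ramp areas lie in $\mathcal{P}_\boxdot$; the height guarantee survives the multi-agent reduction by the argument already given in the text; and the finitely many small cases below the hidden constant's threshold are absorbed by the Knuth definition of $\Omega$.
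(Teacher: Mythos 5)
There is a genuine gap, and it is not the one you flagged. Your argument establishes only that \emph{some} position --- a cell of $a$ at maximal graph distance from the access point $b$ --- is reachable by a ramp of height $\Omega(\sqrt{|a|})$. But the lemma, as the paper uses it (it feeds Lemma~\ref{lemma:worst_case_height}, the claim that roughly cubic structures are buildable in polynomial time, and the remark in the experiments that $i_r=1$ is the \emph{lowest} setting for which this height guarantee holds), is a bound on the height the RFT-generated ramp attains at an \emph{arbitrary} target column adjacent to the ramp area. For a target whose spanning-tree neighbor sits at small DFS depth, the central path has length $O(1)$ while $|a|$ may be arbitrarily large, and your eccentricity/ball-packing argument yields nothing there. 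The paper's proof closes exactly this case with the reversible side ramps that you explicitly discard: it partitions $a$ into the (possibly short) central path and the side subtrees hanging off it, shows each subtree of size $s_i$ with longest root path $l_i$ satisfies $s_i \leq 2l_i(l_i+1)$ (an $L_1$-ball bound around the connection point, structurally similar to your packing step but applied per subtree) while contributing height at least $\lfloor l_i/2\rfloor = \Omega(\sqrt{s_i})$, and then sums these contributions over the partition, using $\sum_i \sqrt{s_i} \geq \sqrt{\sum_i s_i}$. Your claim that the argument ``will use no side ramps and will not depend on $i_r$'' therefore contradicts the paper's own statement that the guarantee requires $i_r \geq 1$; a correct proof cannot avoid the compound-ramp machinery.

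A secondary problem is the step you yourself call delicate: the assertion that the cell of $a$ farthest from $b$ ``is not a cut vertex of $a$'' and that a boundary cell of $a$ at distance $\Omega(\sqrt{|a|})$ can be extracted from it is a non sequitur as written --- neither the cut-vertex claim nor its relevance to finding a cell with a neighbor outside $a$ is justified. This would need repair even for the weaker existential statement you are actually proving. The parts of your argument that are sound --- the $L_1$-ball estimate $|a| \leq 2r^2+2r$ forcing eccentricity $r \geq \tfrac12\sqrt{|a|}$, and the observation that DFS-tree depth dominates graph distance --- are correct and could serve as a cleaner derivation of the special case where the target happens to lie at the far end of the area, but they do not substitute for the per-subtree accounting that the lemma requires.
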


\begin{proof}
    When we decide on the reversible ramp's maximum height, we use a backward ramp with or without side ramps depending on which one is higher. Let $l(s)$ be the length (edge count) of the central path of ramp $s$. Let there be a ramp $r$ from a border cell, going through a ramp area, up to a target position $v_n$. Each following node of the ramp central path is at least one block higher. If the ramp fills the whole ramp area, maximum height is reached, and lemma \ref{lemma:worst_case_area} holds.

    For simplicity, we remove all leaves of the spanning tree, which are not part of the central path of ramp $r$, but which are connected to it. This removes at most 4 nodes for each central ramp node (the number of its neighbors) and positive multiplicative constants do not affect the big O notation. In the case of the ReRamp algorithm, the maximum number of removed nodes per central path node is actually 2, as the first node is a border cell and therefore without side ramps, the last node is next to a column targeted for deconstruction, so it is the starting border cell or has a predecessor (in both cases having at most 2 neighbors for side ramps), and all other central path nodes have two neighbors on the central path and therefore at most two side ramps of length one (the removed leaves).
    
    Let there now be a positive number of sub-trees of the spanning tree, which intersect the ramp $r$ central path only by their root and the root is a leaf (being duplicated if there is more than one side ramp connection point there). If we choose the longest path $l_i$ (edge count) within each subtree for a reversible ramp, connected to the ramp $r$ at the sub-tree root, we gain a sub-ramp with height at least $\lfloor l_i / 2\rfloor$ blocks. All other sub-ramp paths within the given sub-tree must be shorter than $l_i$ and therefore must fit within the $L_1$ distance $l_i$ from the sub-tree root.

    The $L_1$ distance $l_i$ creates an area of size $|a|_s = 4\sum_{j=1}^{l_i}{j} = 2l_i(l_i+1)$ around the sub-tree connection point (not counting the point, since it already belongs to the central path of ramp $r$). Let $|a|_s' = 4(l_i+1)^2 > 2l_i(l_i+1) = |a|_s$ be the ramp area upper bound. We raise the ramp $r$ by at least $\lfloor l_i / 2\rfloor \geq (l_i - 1) / 2, \forall l_i \in \mathbb{N}$ blocks of the side ramp (due to the way the backward ramp is constructed). The side ramp has length $l_i > 1$ (root and at least two side ramp nodes due to previous leaf removal). Let $w_s' = (l_i - 1) / 2$ be a lower bound of the estimation, how much can the sub-tree raise the ramp $r$.

    The equation \ref{eqn:worst_case_height_proof} chooses $C=1/12$ and $n_0 = 2$ ($n_0$ is, in this case, the first value of $l_i$), the inequality proves (by definition) that $w_s' \in \Omega(\sqrt{|a|_s'})$. Since we proved that even $g(n)$ lower bound ($w_s'$) is larger than $f(n)$ upper bound ($|a|_s'$) with $C$, the inequality must hold even for the actual functions $g(n)$ and $f(n)$, proving the lemma holds for side-ramp trees. Since the spanning tree covers the whole ramp area (due to the properties of the DFS algorithm used), all the area nodes are used either by the central path of ramp $r$ or side ramp trees. The lemma \ref{lemma:worst_case_area} thus holds for all ramp area nodes.
\end{proof}

\begin{lemma}
    The worst-case ReRamp algorithm maximum construction height is $\Omega(\sqrt{n_a})$ (using the Knuth definition \cite{Knuth1976BigOA}), where $n_a$ is the maximum construction height in the ramp area $a$.
    \label{lemma:worst_case_height}
\end{lemma}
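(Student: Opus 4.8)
The plan is to obtain this lemma as an immediate consequence of Lemma~\ref{lemma:worst_case_area} combined with the linear ceiling on achievable heights provided by Lemma~\ref{lemma:max_ramp}. Lemma~\ref{lemma:worst_case_area} is a \emph{lower} bound on what ReRamp delivers, measured against the area size $|a|$; Lemma~\ref{lemma:max_ramp} gives an \emph{upper} bound on what any strategy can achieve in that same area. Composing a lower bound with an upper bound lets us eliminate $|a|$ and re-express the guarantee purely in terms of the best possible height $n_a$.

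Concretely, I would first record the ceiling $n_a \le |a| + 1$. To raise a column to height $h$, the agent delivering its top block must stand on an adjacent column of relative height $h-1$ (the \emph{deliver} constraint $z_{t,x,y}=z_{t,x',y'}$), and unrolling this together with the \emph{move} constraint down to a border cell (which carries no blocks) forces a ramp of height $h-1$ inside the ramp area; by Lemma~\ref{lemma:max_ramp} such a ramp has at most $|a|$ layers, so $h-1 \le |a|$. (Equivalently one can cite the \emph{add\_edge}/Lemma~\ref{lbl:deconstructablePath} statement directly, which deconstructs a column of height up to $n+1$ from a ramp of length $n$, hence again $n_a = O(|a|)$.) Consequently $|a| \ge n_a - 1 \ge n_a/2$ once $n_a \ge 2$. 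Then I would substitute into Lemma~\ref{lemma:worst_case_area}: writing $w_a$ for the worst-case ReRamp construction height, that lemma gives $w_a \ge C_1\sqrt{|a|}$ for a constant $C_1>0$ and all sufficiently large $|a|$, so $w_a \ge C_1\sqrt{n_a/2} = (C_1/\sqrt2)\,\sqrt{n_a}$ for all sufficiently large $n_a$, which by the Knuth definition of $\Omega$ \cite{Knuth1976BigOA} (constant $C_1/\sqrt2$, suitable threshold) is exactly $w_a \in \Omega(\sqrt{n_a})$.

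I do not anticipate a genuine obstacle; the statement is essentially a corollary. The only points that need care are directional bookkeeping — the bound from Lemma~\ref{lemma:max_ramp} must be used in the form $|a| = \Omega(n_a)$, never the reverse — and confirming that the additive constant in $n_a \le |a|+1$ is asymptotically irrelevant, i.e. $\sqrt{n_a-1} = \Theta(\sqrt{n_a})$. One should also check that both cited lemmas are being applied to the \emph{same} ramp area $a$, so that the $|a|$ terms legitimately cancel.
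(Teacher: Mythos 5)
Your proposal is correct and follows essentially the same route as the paper: the paper's proof likewise composes the lower bound $w_a \in \Omega(\sqrt{|a|})$ from Lemma~\ref{lemma:worst_case_area} with the ceiling $n_a \leq |a|$ supplied by Lemma~\ref{lemma:max_ramp} to conclude $w_a \in \Omega(\sqrt{n_a})$. Your version is marginally more careful about the additive constant and the direction of the inequality, but the underlying argument is identical.
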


\begin{proof}
    Let $w_a$ be the worst-case ReRamp algorithm construction height in the ramp area $a$. Since the ramp can rise by at most one block for each ramp area node (lemma \ref{lemma:max_ramp}), then $(n_a \leq |a| \wedge w_a \in \Omega(\sqrt{|a|})) \Rightarrow w_a \in \Omega(\sqrt{n_a})$. In other words: Since the lemma \ref{lemma:max_ramp} proved rise by at least a square root of the ramp area, which is at least as large as the maximum height, then the less strict lemma \ref{lemma:worst_case_height} also holds.
\end{proof}

\begin{algorithm}
\caption{Single-agent ReRamp algorithm.}
\label{alg:singlereramp}
    \SetAlgoLined
    \KwData{structure heigh-map, accessible area $\mathcal{a}$, access point $v_0$, max side-ramp recursion $i_r$}
    \KwResult{single-agent plan for constructing the structure}
    initialization\;
    spanning\_tree $\gets$ simple\_deconstruct\_DFS($v_0$, $\mathcal{a}$)\;
    \ForEach{structure column $v_{n+1}'$ within $\mathcal{a}$, neighbored by a spanning tree node $v_n'$}{
        use RFT to create a ramp $r$ from $v_0$ to $v_n'$ over the spanning tree with max side-ramp recursion $i_r$\;
        \If{possible to \textit{add\_edge} from $v_n$ to $v_{n+1}$}{
            run deconstruct\_DFS($v_n$, $\mathcal{a}$, ramp $r$) to expand the spanning tree\;
        }
    }
\end{algorithm}

\begin{algorithm}
\caption{Multi-agent ReRamp algorithm.}
\label{alg:multireramp}
    \SetAlgoLined
    \KwData{structure heigh-map, max side-ramp recursion $i_r$}
    \KwResult{multi-agent plan for constructing the structure}
    initialization\;
    divide the structure into areas by closest access point\;
    \ForEach{area and its access point}{
        \While{structure in area changed}{
            try to deconstruct the structure within the area using the single-agent ReRamp algorithm\;
        }
    }
    \While{more than one area remains}{
        connect two neighboring earliest finished areas into one\;
        \While{structure in new area changed}{
            try to deconstruct the structure within the new area using the single-agent ReRamp algorithm with max side-ramp recursion $i_r$\;
        }
    }
    pad plan for each agent to maximum plan length\;
    reverse all plans;
    \vspace{5pt}
\end{algorithm}


\section{Experiments}
We perform the experiments on a 3 GHz Intel Skylake processor with 16 physical cores and 132 GB of RAM. Our implementation of the ReRamp model is single-threaded and written in Python 3.12. We run the ReRamp algorithm 20 times for each instance to get the mean runtime.

In the first experiment, we focus on the comparison with the state-of-the-art algorithms. We compare our algorithm to the exact solvers of the MACC problem. Since they use generic solvers and do not define the ramp-building strategy, they can -- theoretically -- build to the maximum height for each structure. Practically, they are quite limited by the exponential complexity of their problem description. This is also the reason, why all six of the benchmark instances are small.

In particular, we test our ReRamp algorithm against the MILP and CP models by \cite{koenigexact}, the Fraction time MILP model by \cite{icaart24} (which was tested only against 2 unmodified benchmark instances), 3D decomposition with MILP-based Approach by \cite{decomposition} (denoted \enquote{Decomposition A} in the table \ref{tab:result1}) and Parallel Construction of substructures using 3D decomposition with MILP-based Approach (\enquote{Decomposition B} in the table \ref{tab:result1}). All performance data of the other algorithms is sourced from their respective papers, and the hardware used for the tests is roughly similar (Intel Xeon E5-2660 v3
CPU 2.60 GHz, Intel Skylake 3 GHz, Intel Core i7-7700K CPU 4.20GHz, respectively). All structures are computed with max side-ramp recursion set to 1 (as this is the lowest setting allowing the structure height guarantees, given by lemma \ref{lemma:worst_case_area}).

While the Tree-Based algorithm is in some aspects similar to the ReRamp algorithm -- both are heuristic algorithms, and both work with spanning trees of the building area -- our algorithm is intended to work on higher structures than the Tree-Based algorithm can create. The wider area around the structure, which the Tree-Based algorithm requires, allows the use of simpler ramps and easier parallelization. This wide area may not be available when building real-world structures, especially in urbanized areas. So the modification of at least the workspace matrix of the Tree-Based algorithm would be necessary, which is outside the scope of this paper.

We can see in table \ref{tab:result1}, that our ReRamp algorithm has a much smaller computation time, than all state-of-the-art algorithms able to build its class of structures. This is mainly caused by the fact, that algorithms with comparable capabilities to the ReRamp algorithm in terms of structure height are computed using general solvers with exponential time complexity. In contrast, the ReRamp algorithm is polynomial (see lemma \ref{lemma:polynomialReRamp}) and uses the more time-consuming reversible ramps only in places, where the algorithm considers it necessary. Furthermore, our algorithm displays comparable performance to the Decomposition A algorithm, while being at least an order of magnitude faster to compute.

\begin{table}[htb]
\centering
\caption{Experimental results; comparison of the performance of our ReRamp model with data for both \enquote{Constant time} models by \cite{koenigexact}, data for \enquote{Fraction time} model by \cite{icaart24}, data for both \enquote{Decomposition} models by \cite{decomposition}; visualizations on the left are recreations of the figures used in the papers \cite{koenigexact,icaart24,decomposition}; best value in \textcolor{olive}{green}.}
\resizebox{\columnwidth}{!}{
\begin{tabular}{llllllll}
\hline
Instance & Model & Computation & Makespan & Sum of & Robots \\
 &  & time mean & & costs & \\ \hline \hline
\multirow{6}{*}{1
\begin{minipage}[b]{0.08\textwidth}
    \includegraphics[width=\textwidth]{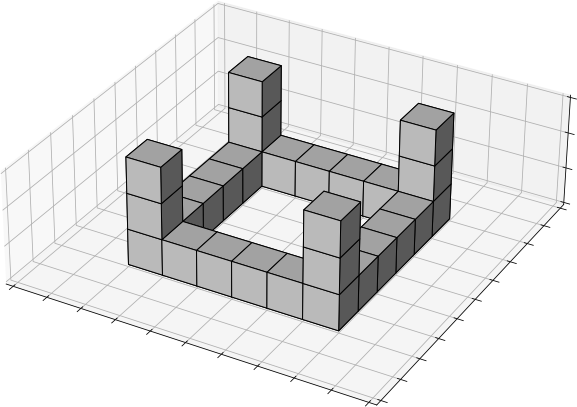}
\end{minipage}
} & Const. time MILP & 29s & \textcolor{olive}{11} & 176 & 34 \\
 & Const. time CP & >7d & \textcolor{olive}{11} & 178 & 30 \\
 & Fraction time & 26.75s & \textcolor{olive}{11} & 232 & 37 \\
 & Decomposition A & 241.3s & 48 & 176 & -- \\
 & Decomposition B & 259.4s & 17 & 179 & -- \\
 & ReRamp & \textcolor{olive}{0.526s} & 47 & 268 & 20 \\ \hline
\multirow{6}{*}{2
\begin{minipage}[b]{0.08\textwidth}
    \includegraphics[width=\textwidth]{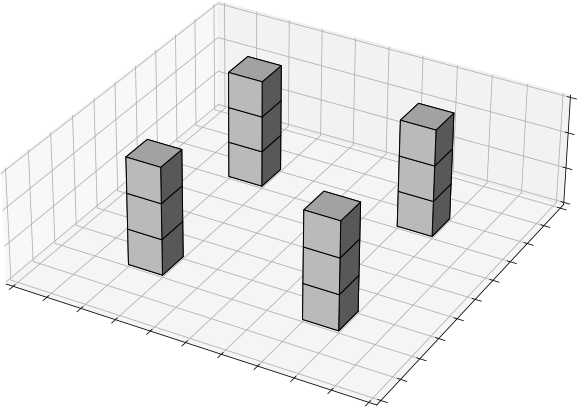}
\end{minipage}
} & Const. time MILP & 3s & \textcolor{olive}{11} & 128 & 28 \\
 & Const. time CP & 1.2h & \textcolor{olive}{11} & 128 & 28 \\
 & Fraction time & 24.10s & \textcolor{olive}{11} & 196 & 32 \\
 & Decomposition A & 235.9s & 48 & 128 & -- \\
 & Decomposition B & 198.1s & 14 & 128 & -- \\
 & ReRamp & \textcolor{olive}{0.465s} & 47 & 188 & 4 \\ \hline
\multirow{5}{*}{3
\begin{minipage}[b]{0.08\textwidth}
    \includegraphics[width=\textwidth, trim={0 0pt 0 0pt}, clip]{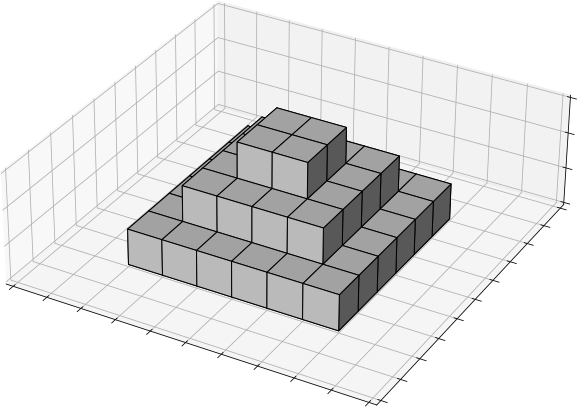}
\end{minipage}
} & Const. time MILP & 1.2h & \textcolor{olive}{13} & 344 & 44 \\
 & Const. time CP & >7d & \textcolor{olive}{13} & 354 & 44 \\
 & Decomposition A & 377.2s & 106 & 326 & -- \\
 & Decomposition B & 318.1s & 44 & 326 & -- \\
 & ReRamp & \textcolor{olive}{0.424s} & 46 & 376 & 20 \\ \hline
\multirow{5}{*}{4
\begin{minipage}[b]{0.08\textwidth}
    \includegraphics[width=\textwidth, trim={0 0pt 0 0pt}, clip]{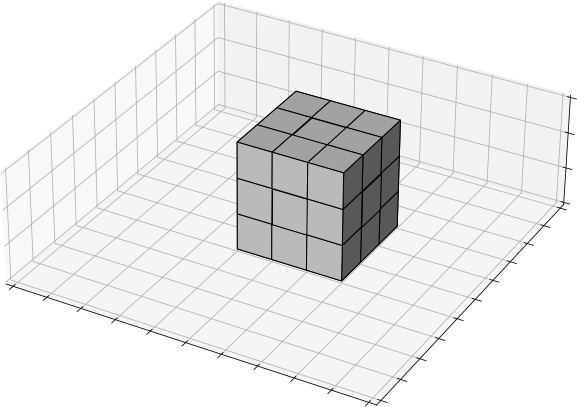}
\end{minipage}
} & Const. time MILP & 5.5h & \textcolor{olive}{17} & 429 & 42 \\
 & Const. time CP & >7d & \textcolor{olive}{17} & 452 & 50 \\
 & Decomposition A & 31.2s & 113 & 204 & -- \\
 & Decomposition B & 758.6s & 75 & 263 & -- \\
 & ReRamp & \textcolor{olive}{0.401s} & 74 & 419 & 7 \\ \hline
\multirow{5}{*}{5
\begin{minipage}[b]{0.08\textwidth}
    \includegraphics[width=\textwidth, trim={0 0pt 0 0pt}, clip]{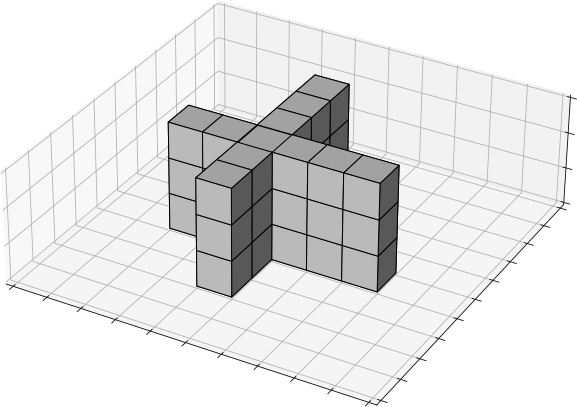}
\end{minipage}
} & Const. time MILP & 5.7d & \textcolor{olive}{17} & 368 & 37 \\
 & Const. time CP & >7d & \textcolor{olive}{17} & 395 & 41 \\
 & Decomposition A & 27.8s & 130 & 365 & -- \\
 & Decomposition B & 688.5s & 90 & 381 & -- \\
 & ReRamp & \textcolor{olive}{0.462s} & 388 & 527 & 3 \\ \hline
\multirow{5}{*}{6
\begin{minipage}[b]{0.08\textwidth}
    \includegraphics[width=\textwidth, trim={0 0pt 0 0pt}, clip]{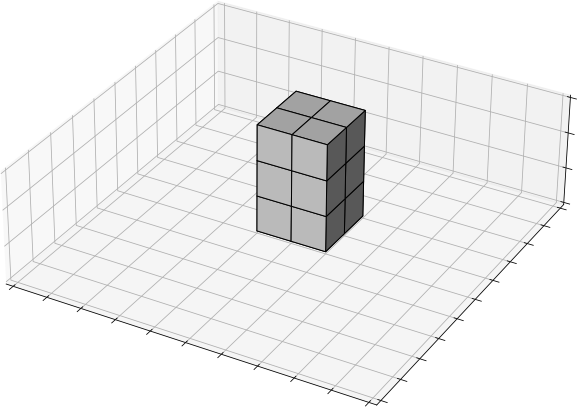}
\end{minipage}
} & Const. time MILP & 183s & \textcolor{olive}{15} & 234 & 27 \\
 & Const. time CP & >7d & \textcolor{olive}{15} & 245 & 28 \\
 & Decomposition A & 12.2s & 50 & 153 & -- \\
 & Decomposition B & 287.9s & 40 & 161 & -- \\
 & ReRamp & \textcolor{olive}{0.417s} & 65 & 224 & 4 \\ \hline
\end{tabular}
}
\label{tab:result1}
\end{table}

In the second experiment, we evaluate the performance of the algorithm on a manually discretized structure of a single-story house (its floor plan is displayed in figure \ref{fig:house}). The discretization is based on the TERMES block size \cite{petersen2011termes} -- 21.5 x 21.5 x 4.5 cm -- with appropriate indents left for windows, doors, and lintels. Each wall is 58 blocks tall (which, if computed by the Tree-Based algorithm \cite{TACR}, would require more than 12m of free space on each side of the house; we use 1.29m for the ReRamp algorithm).  The max side-ramp recursion is again set to 1. The experiment aims to demonstrate the capability of the ReRamp algorithm to create plans for real-life-sized structures.

The ReRamp algorithm completed the plan for the construction of the discretized house in 195.6 minutes (roughly 3 hours). That is less time than was necessary to compute the 3x3x3 cube structure (building area 10x10x3 blocks) in the MILP model by \cite{koenigexact}. In comparison, our house structure is 76x44x58 blocks with an additional border cell and 5 empty blocks from each side, making the full building area 88x56x58 blocks. To build the house, the ReRamp algorithm provides a plan for 24 agents with a makespan of 208 388 442 timesteps and the average sum-of-costs of roughly $474 \cdot 10^6$ timesteps.

The successful creation of this plan proves that the ReRamp algorithm is capable of handling structures with the rough dimensions of a house. This is an important step towards applying multi-agent construction in real-life projects.

\begin{figure}
    \centering
    \includegraphics[width=\columnwidth]{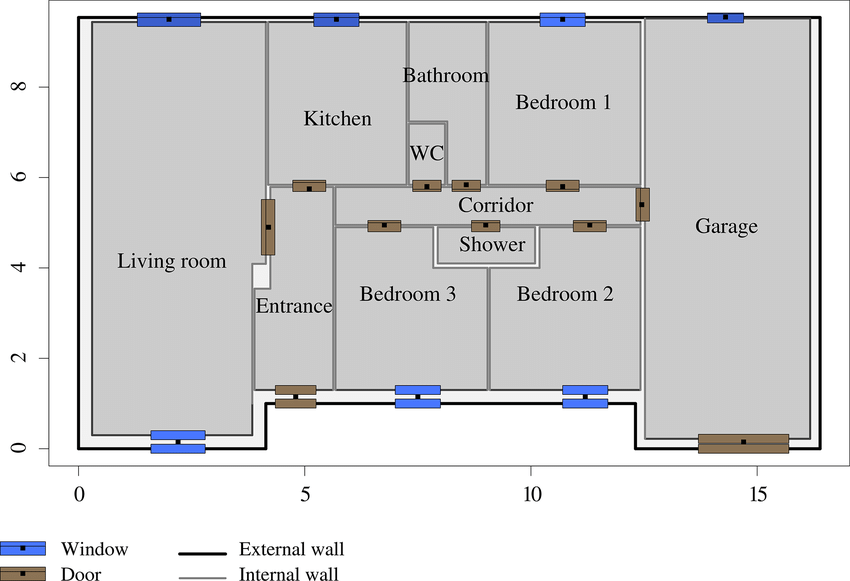}
    \vspace{5pt}
    \caption{\enquote{Top view of the single-storey house.} by Claire Richert, Hélène Boisgontier, and Frédéric Grelot is licensed under CC BY 4.0 (figure 1 of paper \cite{house}).}
    \label{fig:house}
    \vspace{17pt}
\end{figure}


\section{Conclusion}

We propose the ReRamp algorithm, a new algorithm that uses our new concept of reversible ramps to solve problems in multi-agent collective construction. We show, that ramps with a tree footprint are a feasible alternative to path-footprint ramps, allowing a more efficient use of the ramp area at polynomial complexity, which leads to higher structures.

We prove, that our polynomial time algorithm can build to at least $\Omega(\sqrt{n})$ height, where $n$ is the maximum height within a given ramp area and also the size (node count) of the area on the grid. This is a very promising result. It proves that while the decision problem of structure constructability is NP-hard for structures reaching maximum height, we can compute plans for structures with a roughly cubic shape (width, depth, and height) in polynomial time (since the area is $n = \text{width} \cdot \text{depth}$). We also demonstrate this by experimentally generating a plan for a discretized 1:1 structure of a single-story house.

We use a set of benchmark structures to compare our algorithm to state-of-the-art algorithms, which can build to the same height as our algorithm. The experiments show at least an order of magnitude improvement in computational speed, even on small structures. This is expected, as all the current state-of-the-art algorithms, capable of building to such heights, are reliant on generic solvers, making them exponential-time relative to structure size. While the solutions of our algorithm are not optimal, on the benchmark instances they stay mostly within one order of magnitude from the optimal solution. We consider this an acceptable tradeoff for having a new polynomial algorithm, capable of building to such height.

In future work, we propose to focus on further distribution of work among agents. The optimal planners used during experiments may work as a lead, which parts of construction are not yet spread enough among the agents. The TERMES system may also have to be modified to handle different building materials and to ensure block column stability, as the current foam blocks may not be well suited for large structures.



\begin{ack}
This research was supported by GA\v{C}R - the Czech Science Foundation under the project number 22-31346S.
\end{ack}



\bibliography{paper_arxiv}

\end{document}